\newtheorem{myDef}{Definition}
\newtheorem{myTheo}{Theorem}
\newtheorem{myLemma}{Lemma}
\newtheorem{myCor}{Corollary}
\newtheorem{myObs}{Observation}
\setlist{topsep=0pt}
\newcommand{\compresslist}{
  \setlength{\itemsep}{1pt}
  \setlength{\parskip}{0pt}
  \setlength{\parsep}{0pt}
  \setlength{\topskip}{0pt}
}
\newcommand*\circled[1]{\tikz[baseline=(char.base)]{
            \node[shape=circle,draw,inner sep=0.25pt] (char) {#1};}}
\g@addto@macro\normalsize{%
  \setlength\abovedisplayskip{0pt}
  \setlength\belowdisplayskip{0pt}
  \setlength\abovedisplayshortskip{0pt}
  \setlength\belowdisplayshortskip{0pt}
}
\def\thm@space@setup{%
  \thm@preskip=1pt
  \thm@postskip=\thm@preskip % or whatever, if you don't want them to be equal
}
\begin{document}
\title{CENTURION: Incentivizing Multi-Requester Mobile Crowd Sensing\thanks{We sincerely thank Professor Julia Chuzhoy for her valuable contribution. We gratefully acknowledge the support of National Science Foundation grants CNS-1330491, and 1566374. The views and conclusions contained in this document are those of the authors and should not be interpreted as necessarily representing the official policies, either expressed or implied, of the sponsors.}}
\author{Haiming Jin\IEEEauthorrefmark{1}, Lu Su\IEEEauthorrefmark{2}, Klara Nahrstedt\IEEEauthorrefmark{1}\\
\IEEEauthorblockA{
\IEEEauthorrefmark{1}Department of Computer Science, University of Illinois at Urbana-Champaign, IL, USA\\ 
\IEEEauthorrefmark{2}Department of Computer Science and Engineering, State University of New York at Buffalo, NY, USA\\
Email: hjin8@illinois.edu, lusu@buffalo.edu, klara@illinois.edu}
}
\maketitle

\begin{abstract}
The recent proliferation of increasingly capable mobile devices has given rise to mobile crowd sensing (MCS) systems that outsource the collection of sensory data to a crowd of participating workers that carry various mobile devices. Aware of the paramount importance of effectively \textit{incentivizing participation} in such systems, the research community has proposed a wide variety of incentive mechanisms. However, different from most of these existing mechanisms which assume the existence of only one data requester, we consider MCS systems with \textit{multiple data requesters}, which are actually more common in practice. Specifically, our incentive mechanism is based on \textit{double auction}, and is able to stimulate the participation of both data requesters and workers. In real practice, the incentive mechanism is typically not an isolated module, but  interacts with the \textit{data aggregation mechanism} that aggregates workers' data. For this reason, we propose CENTURION, a novel \textit{integrated framework} for \textit{multi-requester} MCS systems, consisting of the aforementioned incentive and data aggregation mechanism. CENTURION's incentive mechanism satisfies \textit{truthfulness}, \textit{individual rationality}, \textit{computational efficiency}, as well as guaranteeing \textit{non-negative social welfare}, and its data aggregation mechanism generates \textit{highly accurate} aggregated results. The desirable properties of CENTURION are validated through both theoretical analysis and extensive simulations. 
\end{abstract}
\IEEEpeerreviewmaketitle
\section{Introduction}
%joint design 

Recent years have witnessed the rise of mobile crowd sensing (MCS), a newly-emerged sensing paradigm that outsources the collection of sensory data to a crowd of participating users, namely (crowd) workers, who usually carry increasingly capable mobile devices (e.g., smartphones, smartwatches, smartglasses) with a plethora of on-board sensors (e.g., gyroscope, camera, GPS, compass, accelerometer). Currently, a large variety of MCS systems \cite{PMohanSenSys08, AThiagarajanSenSys09, JErikssonMobiSys08, myheartmap, SHuTOSN15, YChengSenSys14} have been deployed that cover almost every aspect of our lives, including healthcare, smart transportation, environmental monitoring, and many others. 

To perform the sensing tasks, the participating workers typically consume their own resources such as computing and communicating energy, and expose themselves to potential privacy threats by sharing their personal data. For this reason, a participant would not be interested in participating in the sensing tasks, unless she receives a satisfying reward to compensate her resource consumption and potential privacy breach. Therefore, it is necessary to design an effective \textit{incentive mechanism} that can achieve the maximum user participation. Due to the paramount importance of stimulating participation, many incentive mechanisms \cite{HKaiMobiHoc16, XHongAllerton14, CManHonMobiHoc15, GLinINFOCOM15, JHaimingMobiHoc15, YWenTVT14, ZDongINFOCOM14, CYanjiaoINFOCOM16, HShiboINFOCOM14, TLuoINFOCOM15, DLingjieINFOCOM12, ZQiINFOCOM15, ZXiangINFOCOM15, YDejunMobicom12, ZXinglinTPDS14, ZHonggangINFOCOM16, WYuemingINFOCOM15, IKoutsopoulosINFOCOM13, JHaimingMobiHoc16, JHaimingICDCS16, WJingICDCS16, FZhenniINFOCOM14, KMerkouriosMobiHoc16, PLingjunINFOCOM16, DPengMobiHoc15} have been proposed by the research community. However, most of these aforementioned past literature assume that there is only one data requester who also serves as the platform in the MCS system. In practice, however, there are usually \textit{multiple data requesters} competing for human resources, who usually outsource worker recruiting to third-party platforms (e.g., Amazon Mechanical Turk \cite{AMT}) that have already gathered a large number of workers. Therefore, in this paper, we focus on such MCS systems where three parties, including the data requesters, a platform (i.e., a cloud-based central server), as well as a crowd of participating workers co-exist, and aim to develop \textit{a new incentive mechanism that can decide which worker serves which data requester at what price}.

In real practice, the sensory data provided by individual workers are usually quite unreliable due to various factors (e.g., poor sensor quality, lack of sensor calibration, environment noise). Hence, in order to cancel out the possible errors from individual workers, it is highly necessary that the platform utilizes a \textit{data aggregation mechanism} to properly aggregate their noisy and even conflicting data. In an MCS system, the incentive and the data aggregation mechanism are usually not isolated from each other. In fact, the data aggregation mechanism typically interacts with the incentive mechanism, and thus, affects its design and performance. Intuitively, if the platform aggregates workers' data in naive ways (e.g., voting and average) that treat all workers' data equally, the incentive mechanism does not need to distinguish them with respect to their reliability. However, a weighted aggregation method that puts higher weights on more reliable workers is much more desirable, because it shifts the aggregated results towards the data provided by the workers with higher reliability. Accordingly, the incentive mechanism should also incorporate workers' reliability, and selects workers that are more likely to provide reliable data.

Therefore, different from most of the aforementioned existing work \cite{HKaiMobiHoc16, XHongAllerton14, CManHonMobiHoc15, GLinINFOCOM15, JHaimingMobiHoc15, YWenTVT14, ZDongINFOCOM14, CYanjiaoINFOCOM16, HShiboINFOCOM14, TLuoINFOCOM15, DLingjieINFOCOM12, ZQiINFOCOM15, ZXiangINFOCOM15, YDejunMobicom12, ZXinglinTPDS14, ZHonggangINFOCOM16, WYuemingINFOCOM15, IKoutsopoulosINFOCOM13, JHaimingMobiHoc16, JHaimingICDCS16, WJingICDCS16, FZhenniINFOCOM14, KMerkouriosMobiHoc16, PLingjunINFOCOM16, DPengMobiHoc15}, we propose CENTURION\footnote{The name CENTURION comes from in\underline{CENT}ivizing m\underline{U}lti-\underline{R}equester mob\underline{I}le cr\underline{O}wd se\underline{N}sing.}, a novel \textit{integrated framework} for \textit{multi-requester} MCS systems, which consists of a \textit{weighted data aggregation mechanism} that considers workers' diverse reliability in the calculation of the aggregated results, together with an incentive mechanism that selects workers who potentially will provide more reliable data. Specifically, CENTURION's incentive mechanism is based on \textit{double auction} \cite{RMcAfeeJET92}, which involves auctions among not only the workers, but also the data requesters, and is able to incentivize the participation of both data requesters and workers. This paper makes the following contributions. 
\begin{itemize}[leftmargin=*]\compresslist
\item Different from existing work, we propose a novel \textit{integrated framework} for \textit{multi-requester} MCS systems, called CENTURION, consisting of a data aggregation and an incentive mechanism. Such an integrated design, which captures the interactive effects between the two mechanisms, is much more complicated and challenging than designing them separately.  
\item CENTURION's double auction-based incentive mechanism is able to incentivize the participation of both data requesters and workers, and bears many desirable properties, including \textit{truthfulness}, \textit{individual rationality}, \textit{computational efficiency}, as well as \textit{non-negative social welfare}. 
\item The data aggregation mechanism of CENTURION takes into consideration workers' reliability, and calculates \textit{highly accurate} aggregated results. 
\end{itemize}

In the rest of this paper, we first discuss the past literature that are related to this work in Section \ref{sec:relatedwork}, and introduce the preliminaries in Section \ref{sec:prelim}. Then, the design details of CENTURION's data aggregation and incentive mechanism are described in Section \ref{sec:dd}. In Section \ref{sec:perleval}, we conduct extensive simulations to validate the desirable properties of CENTURION. Finally in Section \ref{sec:conc}, we conclude this paper.

\section{Related Work}\label{sec:relatedwork}
Aware of the paramount importance of attracting worker participation, the research community has recently developed various incentive mechanisms \cite{HKaiMobiHoc16, XHongAllerton14, CManHonMobiHoc15, GLinINFOCOM15, JHaimingMobiHoc15, YWenTVT14, ZDongINFOCOM14, CYanjiaoINFOCOM16, HShiboINFOCOM14, TLuoINFOCOM15, DLingjieINFOCOM12, ZQiINFOCOM15, ZXiangINFOCOM15, YDejunMobicom12, ZXinglinTPDS14, ZHonggangINFOCOM16, WYuemingINFOCOM15, IKoutsopoulosINFOCOM13, JHaimingMobiHoc16, JHaimingICDCS16, WJingICDCS16, FZhenniINFOCOM14, KMerkouriosMobiHoc16, PLingjunINFOCOM16, DPengMobiHoc15} for MCS systems. Among them, game-theoretic incentive mechanisms \cite{HKaiMobiHoc16, XHongAllerton14, CManHonMobiHoc15, GLinINFOCOM15, JHaimingMobiHoc15, YWenTVT14, ZDongINFOCOM14, CYanjiaoINFOCOM16, HShiboINFOCOM14, TLuoINFOCOM15, DLingjieINFOCOM12, ZQiINFOCOM15, ZXiangINFOCOM15, YDejunMobicom12, ZXinglinTPDS14, ZHonggangINFOCOM16, WYuemingINFOCOM15, IKoutsopoulosINFOCOM13, JHaimingMobiHoc16, JHaimingICDCS16, WJingICDCS16, FZhenniINFOCOM14}, which utilize either auction \cite{GLinINFOCOM15, ZQiINFOCOM15, ZXiangINFOCOM15, YDejunMobicom12, FZhenniINFOCOM14, JHaimingMobiHoc15, YWenTVT14, ZDongINFOCOM14, IKoutsopoulosINFOCOM13, ZXinglinTPDS14, ZHonggangINFOCOM16, WYuemingINFOCOM15, JHaimingMobiHoc16, JHaimingICDCS16, WJingICDCS16} or other game-theoretic models \cite{TLuoINFOCOM15, DLingjieINFOCOM12, CManHonMobiHoc15, XHongAllerton14, HShiboINFOCOM14, CYanjiaoINFOCOM16}, have gained increasing popularity due to their ability to tackle workers' selfish and strategic behaviors.  These mechanisms typically aim to maximize the platform's profit \cite{ZQiINFOCOM15, ZXiangINFOCOM15, YDejunMobicom12, ZXinglinTPDS14, TLuoINFOCOM15, DLingjieINFOCOM12, HShiboINFOCOM14, CYanjiaoINFOCOM16, ZHonggangINFOCOM16, WYuemingINFOCOM15} or social welfare \cite{GLinINFOCOM15, JHaimingMobiHoc15, YWenTVT14, ZDongINFOCOM14, CManHonMobiHoc15}, and minimize the platform's payment \cite{IKoutsopoulosINFOCOM13, XHongAllerton14, HKaiMobiHoc16, JHaimingMobiHoc16, JHaimingICDCS16, WJingICDCS16} or social cost \cite{FZhenniINFOCOM14}.

Different from most of the aforementioned past literature which assume that there exists only one data requester, we propose a novel incentive mechanism for MCS systems with \textit{multiple data requesters} that compete for human resources. In fact, there do exist several prior work \cite{ZHonggangINFOCOM16, ZXiangINFOCOM15, FZhenniINFOCOM14} designing incentive mechanisms for the multi-requester scenario. However, they do not provide any \textit{joint design} of the data aggregation and the incentive mechanism as in this paper, which is much more challenging than designing the two mechanisms as isolated modules. Moreover, although similar integrated designs that consider the two mechanisms are proposed in some existing work \cite{JHaimingMobiHoc16, JHaimingICDCS16}, as previously mentioned, they assume that only one data requester exists in the MCS system.

\section{Preliminaries}\label{sec:prelim}
In this section, we introduce the system overview, reliability level model, auction model, as well as the design objectives. 
\subsection{System Overview}\label{sec:sysoverview}
CENTURION is an MCS system framework consisting of a cloud-based platform, a set of participating workers, denoted as $\mathcal{W}=\{w_1,\cdots, w_N\}$, and a set of requesters, denoted as $\mathcal{R}=\{r_1,\cdots, r_M\}$. Each requester $r_j\in\mathcal{R}$ has a sensing task $\tau_j$ to be executed by the workers. The set of all requesters' tasks is denoted as $\mathcal{T}=\{\tau_1,\cdots, \tau_M\}$. We are specifically interested in the scenario where $\mathcal{T}$ is a set of $M$ different \textit{binary classification tasks} that require workers to locally decide the classes of the events or objects, and report to the platform their local decisions (i.e., the labels of the observed events or objects). Such MCS systems, collecting binary labels from the crowd, constitute a large portion of the currently deployed MCS systems (e.g., congestion detection systems that decide whether or not particular road segments are congested \cite{AThiagarajanSenSys09}, geotagging campaigns that tag whether bumps or potholes exist on specific segments of road surface \cite{PMohanSenSys08, JErikssonMobiSys08}). 

Each task $\tau_j$ has a true label $l_j\in\{-1,+1\}$, unknown to the requesters, the platform, and the workers. If a worker $w_i$ is chosen to execute task $\tau_j$, she will provide to the platform a label $l_{i,j}$. We define $\mathbf{l}=[l_{i,j}]\in\{-1,+1,\bot\}^{N\times M}$ as the matrix containing all workers' labels, where $l_{i,j}=\bot$ means that task $\tau_j$ is not executed by worker $w_i$. For every task $\tau_j$, the platform aggregates workers' labels into an aggregated result, denoted as $\widehat{l}_j$, so as to cancel out the errors from individual workers. The framework of CENTURION is given in Figure \ref{fig:sysframe}, and we describe its workflow as follows. 

\begin{figure}[htb]
\centering
\includegraphics[width=0.38\textwidth]{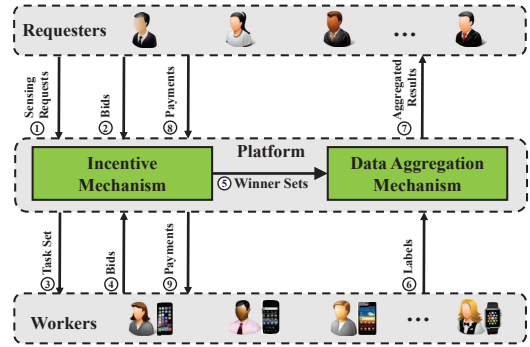}
\caption{Framework of CENTURION (where circled numbers represent the order of the events).}
\label{fig:sysframe}
\end{figure}

\begin{itemize}[leftmargin=*]\compresslist
\item \textbf{Incentive Mechanism.} Firstly, in the double auction-based incentive mechanism, each requester $r_j$ submits to the platform a sensing request containing the sensing task $\tau_j$ to be executed (step \circled{1}), and a bid $a_j$, the amount she is willing to pay if the task is executed (step \circled{2}). Then, the platform announces the set of sensing tasks $\mathcal{T}$ to the workers (step \circled{3}). After receiving the task set, every worker $w_i$ sends to the platform the set of tasks she wants to execute, denoted as $\Gamma_i\subseteq\mathcal{T}$, as well as a bid $b_i$, which is her bidding price for executing them (step \circled{4}). Based on received bids, the platform determines the set of winning requesters $\mathcal{S_R}$, the set of winning workers $\mathcal{S_W}$, as well as the payment $p_j^r$ charged from every winning requester $r_j$ and the payment $p_i^w$ paid to every winning worker $w_i$ (step \circled{5}). Note that losing requesters' tasks are not executed, and thus, they do not submit any payment. Similarly, losing workers do not receive any payment, as they do not execute any task. 
\item \textbf{Data Aggregation Mechanism.} Next, the platform collects the labels submitted by the winning workers (step \circled{6}), calculates the aggregated results, and sends them to the winning requesters (step \circled{7}). 
\item Finally, the platform charges $p_j^r$ from winning requester $r_j$ (step \circled{8}), and pays $p_i^w$ to winning worker $w_i$ (step \circled{9}).  
\end{itemize}

We denote the requesters' and workers' bid profile as $\mathbf{a}=(a_1,\cdots,a_M)$ and $\mathbf{b}=(b_1,\cdots,b_N)$, respectively. Moreover, the requesters' and workers' payment profile is denoted as $\mathbf{p}^r=(p_1^r,\cdots,p_M^r)$ and $\mathbf{p}^w=(p_1^w,\cdots,p_N^w)$, respectively.

\subsection{Reliability Level Model}
Before worker $w_i$ executes task $\tau_j$, her label about this task can be regarded as a random variable $L_{i,j}$. Then, we define the \textit{reliability level} of a worker in Definition \ref{def:reliabilitylevel}.

\begin{myDef}[Reliability Level]\label{def:reliabilitylevel}
A worker $w_i$'s reliability level $\theta_{i,j}$ about task $\tau_j$ is defined as the probability that she provides a correct label about this task, i.e.,
\begin{equation}
\begin{aligned}
\theta_{i,j}=\textnormal{\text{Pr}}[L_{i,j}=l_j]\in[0,1].
\end{aligned}
\end{equation}
Moreover, we denote the workers' reliability level matrix as $\boldsymbol{\uptheta}=[\theta_{i,j}]\in[0,1]^{N\times M}$.
\end{myDef}

We assume that the platform knows the reliability level matrix $\boldsymbol{\uptheta}$ \textit{a priori}, and maintains a historical record of it. In practice, the platform could obtain $\boldsymbol{\uptheta}$ through various approaches. For example, as, in many scenarios, workers tend to have similar reliability levels for similar tasks, the platform could assign to workers some tasks with known labels, and use workers' labels about these tasks to estimate their reliability levels for similar tasks as in \cite{DOlesonHCOMP11}. In cases where ground truth labels are not available, $\boldsymbol{\uptheta}$ can still be effectively inferred from workers' characteristics (e.g., the prices of a worker's sensors, a worker's experience and reputation for similar tasks) using the algorithms proposed in \cite{HLiWWW14}, or estimated using the labels previously submitted by workers about similar tasks by the methods in \cite{QLiSIGMOD14, CMengSenSys15}. 

\subsection{Auction Model}\label{sec:auctionmodel}

In this paper, we consider the scenario where both requesters and workers are \textit{strategic} and \textit{selfish} that aim to maximize their own utilities. Since CENTURION involves auctions among not only the workers, but also the requesters, we utilize the following \textit{double auction for \underline{M}ulti-r\underline{E}quester mobi\underline{L}e cr\underline{O}wd se\underline{N}sing (MELON double auction)}, formally defined in Definition \ref{def:auction}, as the incentive mechanism. 

\begin{myDef}[MELON Double Auction]\label{def:auction}
In a \textit{double auction for multi-requester mobile crowd sensing (MELON double auction)}, each requester $r_j$ obtains a value $v_j$, if her task $\tau_j$ is executed, and bids to the platform $a_j$, the amount she is willing to pay for the execution of her task. Each worker $w_i$ is interested in executing one subset of the tasks, denoted as $\Gamma_i\subseteq\mathcal{T}$, and bids to the platform $b_i$, her bidding price for executing these tasks. Her actual sensing cost for executing all tasks in $\Gamma_i$ is denoted as $c_i$. Both the requesters' values and workers' costs are unknown to the platform. 
\end{myDef}

%should possibly explain why workers will not lie about their interested task set 

Then, we define a requester's and worker's utility, as well as the platform's profit in Definition \ref{def:requesterutility}, \ref{def:workerutility}, and \ref{def:profit}. 
\begin{myDef}[Requester's Utility]\label{def:requesterutility}
A requester $r_j$'s utility is defined as
\begin{equation}
u_j^r=
\begin{cases}
\begin{aligned}
&v_j-p_j^r,&&\text{if~}r_j\in\mathcal{S_R}\\
&0,&&\text{otherwise}
\end{aligned}.
\end{cases}
\end{equation}
\end{myDef}

\begin{myDef}[Worker's Utility]\label{def:workerutility}
A worker $w_i$'s utility is defined as
\begin{equation}
u_i^w=
\begin{cases}
\begin{aligned}
&p_i^w-c_i,&&\text{if~}w_i\in\mathcal{S_W}\\
&0,&&\text{otherwise}
\end{aligned}.
\end{cases}
\end{equation}
\end{myDef}
\begin{myDef}[Platform's Profit]\label{def:profit}
The profit of the platform is defined as
\begin{equation}
\begin{aligned}
u_0=\sum_{j:r_j\in\mathcal{S_R}}p_j^r-\sum_{i:w_i\in\mathcal{S_W}}p_i^w.
\end{aligned}
\end{equation}
\end{myDef}

Based on Definition \ref{def:requesterutility}, \ref{def:workerutility}, and \ref{def:profit}, we define the social welfare of the MCS system in Definition \ref{def:socialwelfare}. 
\begin{myDef}[Social Welfare]\label{def:socialwelfare}
The social welfare of the MCS system is defined as
\begin{align}
u_{\textnormal{\text{social}}}&=u_0+\sum_{i:w_i\in\mathcal{W}}u_i^w+\sum_{j:r_j\in\mathcal{R}}u_j^r\notag\\
&=\sum_{j:r_j\in\mathcal{S_R}}v_j-\sum_{i:w_i\in\mathcal{S_W}}c_i.
\end{align}
\end{myDef}

Clearly, the social welfare is the sum of the platform's profit and all requesters' and workers' utilities. 

\subsection{Design Objectives}
In this paper, we aim to ensure that CENTURION bears the following advantageous properties. 

Since the requesters are strategic and selfish in our model, it is possible that any requester $r_j$ submits a bid $a_j$ that deviates from $v_j$ (i.e., her value for task $\tau_j$). Similarly, any worker $w_i$ might also submit a bid $b_i$ that differs from $c_i$ (i.e., her cost for executing all tasks in $\Gamma_i$). Thus, one of our objectives is to design a \textit{truthful} incentive mechanism defined in Definition \ref{def:truthfulness}.
\begin{myDef}[Truthfulness]\label{def:truthfulness}
A MELON double auction is truthful if and only if bidding $v_j$ and $c_i$ is the dominant strategy for each requester $r_j$ and worker $w_i$, i.e., bidding $v_j$ and $c_i$ maximizes, respectively, the utility of each requester $r_j$ and worker $w_i$, regardless of other requesters' and workers' bids. 
\end{myDef}

By definition \ref{def:truthfulness}, we aim to ensure that both requesters and workers bid truthfully to the platform. Apart from truthfulness, another desirable property that we aim to achieve is \textit{individual rationality} defined in Definition \ref{def:ir}. 
\begin{myDef}[Individual Rationality]\label{def:ir}
A MELON double auction is individual rational if and only if no requesters or workers receive negative utilities, i.e., we have $u_j^r\geq 0$, and $u_i^w\geq 0$, for every requester $r_j$ and worker $w_i$, respectively.
\end{myDef}

Individual rationality is a crucial property to stimulate the participation of both requesters and workers, because it ensures that the charge to a requester is no larger than her value, and a worker's sensing cost is also totally compensated. As mentioned in Section \ref{sec:sysoverview}, CENTURION aggregates workers' labels to ensure that the aggregated results have satisfactory accuracy, which is mathematically defined in Definition \ref{def:accuracy}.
\begin{myDef}[$\beta_j$-Accuracy]\label{def:accuracy}
A task $\tau_j$ is executed with $\beta_j$-accuracy if and only if $\textnormal{\text{Pr}}[\widehat{L}_j\not=l_j]\leq\beta_j$, where $\beta_j\in(0,1)$, and $\widehat{L}_j$ denotes the random variable representing the aggregated result for task $\tau_j$. 
\end{myDef}

By Definition \ref{def:accuracy}, $\beta_j$-accuracy ensures that the aggregated result equals to the true label with high probability. Note that, for every task $\tau_j$, $\beta_j$ is a parameter chosen by the platform, and a smaller $\beta_j$ implies a stronger requirement for the accuracy. 

In short, our objectives are to ensure that the proposed CENTURION framework provides satisfactory \textit{accuracy guarantee} for the aggregated results of all executed tasks, and incentivizes the participation of both requesters and workers in a \textit{truthful} and \textit{individual rational} manner. 
\section{Design Details}\label{sec:dd}

In this section, we present the design details of the incentive and data aggregation mechanism of CENTURION. 
\subsection{Data Aggregation Mechanism}
\subsubsection{Proposed Mechanism}\label{sec:mechanism}
~

Although the data aggregation mechanism follows the incentive mechanism in CENTURION's workflow, we introduce it first, as it affects the design of the incentive mechanism. 

In order to capture the effect of workers' diverse reliability on the calculation of the aggregated results, CENTURION adopts the following \textit{weighted aggregation method}. That is, the aggregated result $\widehat{l}_j$ for every executed task $\tau_j$ is calculated as 
\begin{equation}\label{eq:aggregation}
\begin{aligned}
\widehat{l}_j=\text{sign}\Bigg(\sum_{i:w_i\in\mathcal{S_W},\tau_j\in\Gamma_i}\lambda_{i,j}l_{i,j}\Bigg),
\end{aligned}
\end{equation}
where $\lambda_{i,j}>0$ is worker $w_i$'s weight on task $\tau_j$. Furthermore, the function $\text{sign}(x)$ equals to $+1$, if $x\geq 0$, and $-1$ otherwise.

Intuitively, higher weights should be assigned to workers who are more likely to submit correct labels, which makes the aggregated results closer to the labels provided by more reliable workers. In fact, many state-of-the-art literature \cite{QLiSIGMOD14, CMengSenSys15} utilize such weighted aggregation method to aggregate workers' data. As the weight $\lambda_{i,j}$'s highly affect the accuracy of the aggregated results, we propose, in the following Algorithm \ref{al:aggregation}, the data aggregation mechanism of CENTURION. 

\begin{algorithm}[h]
\small
\KwIn{$\boldsymbol{\uptheta}$, $\mathbf{l}$, $\boldsymbol{\Gamma}$, $\mathcal{S_R}$, $\mathcal{S_W}$\;}
\KwOut{$\big\{\widehat{l}_j|r_j\in\mathcal{S_R}\big\}$\;}
\ForEach{$j$ \text{s.t.} $r_j\in\mathcal{S_R}$}{\label{line:aggregationstart}
	$\widehat{l}_j\leftarrow\sum_{i:w_i\in\mathcal{S_W}, \tau_j\in\Gamma_i} \big(2\theta_{i,j}-1\big)l_{i,j}$\;\label{line:aggregation}
}
\Return $\big\{\widehat{l}_j|r_j\in\mathcal{S_R}\big\}$\;
\caption{Data Aggregation Mechanism}\label{al:aggregation}
\end{algorithm}

Algorithm \ref{al:aggregation} takes as inputs the reliability level matrix $\boldsymbol{\uptheta}$, the workers' label matrix $\mathbf{l}$, the profile of workers' interested task sets, denoted as $\boldsymbol{\Gamma}=(\Gamma_1,\cdots,\Gamma_N)$, the winning requester set $\mathcal{S_R}$, and the winning worker set $\mathcal{S_W}$. Note that a large $\theta_{i,j}$ indicates that a worker $w_i$ has a high reliability level for task $\tau_j$, and any worker $w_i$ with $\theta_{i,j}\leq 0.5$ will not be selected as a winner by the incentive mechanism. The aggregated result $\widehat{l}_j$ for each winning requester $r_j$'s task $\tau_j$ is calculated (line \ref{line:aggregationstart}-\ref{line:aggregation}) using Equation (\ref{eq:aggregation}) with the weight 
\begin{equation}\label{eq:weight}
\begin{aligned}
\lambda_{i,j}=2\theta_{i,j}-1,~\forall r_j\in\mathcal{S_R},~w_i\in\mathcal{S_W},~\tau_j\in\Gamma_i. 
\end{aligned}
\end{equation}

By Equation (\ref{eq:weight}), we have that $\lambda_{i,j}$, i.e., worker $w_i$'s weight for task $\tau_j$, increases with $\theta_{i,j}$, which conforms to our intuition that the higher the probability that worker $w_i$ provides a correct label about task $\tau_j$, the more her label $l_{i,j}$ should be counted in the calculation of the aggregated result about this task. We provide the formal analysis about the data aggregation mechanism in Section \ref{sec:analysisaggregation}. 

\subsubsection{Analysis}\label{sec:analysisaggregation}
~

In Theorem \ref{theo:accuracy}, we prove that the aggregated results calculated by Algorithm \ref{al:aggregation} has desirable accuracy guarantee.

\begin{myTheo}\label{theo:accuracy}
For each executed task $\tau_j$, the data aggregation mechanism given in Algorithm \ref{al:aggregation} minimizes the upper bound of the error probability of the aggregated result, i.e., $\textnormal{Pr}[\widehat{L}_j\not=l_j]$ (where $\widehat{L}_j$ is the random variable representing the aggregated result for task $\tau_j$ mentioned in Definition \ref{def:accuracy}), and satisfies that 
\begin{equation}
\begin{aligned}
\textnormal{Pr}[\widehat{L}_j\not=l_j]\leq\exp\Bigg(-\frac{\sum_{i:w_i\in\mathcal{S_W},\tau_j\in\Gamma_i}(2\theta_{i,j}-1)^2}{2}\Bigg).
\end{aligned}
\end{equation}
\end{myTheo}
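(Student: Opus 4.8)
\emph{Proof plan.} The plan is a Chernoff-type concentration argument followed by a Cauchy--Schwarz optimization over the weights. First I would fix an executed task $\tau_j$, write $S_j=\{i:w_i\in\mathcal{S_W},\ \tau_j\in\Gamma_i\}$ for the workers who label it, and invoke the (implicit) assumption that the label variables $\{L_{i,j}\}_{i\in S_j}$ are mutually independent. The next step is to rewrite the error event: since $\mathrm{sign}(x)=+1$ exactly on $[0,\infty)$, one has $\{\widehat{L}_j\neq l_j\}\subseteq\{\,l_j\sum_{i\in S_j}\lambda_{i,j}L_{i,j}\le 0\,\}$, so it suffices to bound the probability of the latter. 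I would then set $X_i:=l_j\lambda_{i,j}L_{i,j}$; because $l_jL_{i,j}\in\{+1,-1\}$ with $\mathrm{Pr}[l_jL_{i,j}=1]=\theta_{i,j}$, each $X_i$ is an independent variable supported on $\{-\lambda_{i,j},\lambda_{i,j}\}$ with mean $\lambda_{i,j}(2\theta_{i,j}-1)$.

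For the concentration step, for any $t>0$ I would apply Markov's inequality to $e^{-t\sum_i X_i}$ and use independence to get $\mathrm{Pr}[\sum_{i\in S_j}X_i\le 0]\le\prod_{i\in S_j}\mathbb{E}[e^{-tX_i}]$, then bound each factor by Hoeffding's lemma (each $X_i$ lies in an interval of length $2\lambda_{i,j}$) as $\mathbb{E}[e^{-tX_i}]\le\exp(-t\lambda_{i,j}(2\theta_{i,j}-1)+t^2\lambda_{i,j}^2/2)$. This yields
\[
\mathrm{Pr}[\widehat{L}_j\neq l_j]\le\exp\!\Big(-t\!\!\sum_{i\in S_j}\!\lambda_{i,j}(2\theta_{i,j}-1)+\tfrac{t^2}{2}\!\!\sum_{i\in S_j}\!\lambda_{i,j}^2\Big),
\]
and minimizing this quadratic in $t$ (at $t^\star=\sum_i\lambda_{i,j}(2\theta_{i,j}-1)\big/\sum_i\lambda_{i,j}^2$) gives the bound $\exp\!\big(-\tfrac12\big(\sum_i\lambda_{i,j}(2\theta_{i,j}-1)\big)^2\big/\sum_i\lambda_{i,j}^2\big)$.

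Finally, to see that Algorithm \ref{al:aggregation} is optimal, I would observe that minimizing this last expression is the same as maximizing the ratio $\big(\sum_{i\in S_j}\lambda_{i,j}(2\theta_{i,j}-1)\big)^2\big/\sum_{i\in S_j}\lambda_{i,j}^2$, and by Cauchy--Schwarz this ratio is at most $\sum_{i\in S_j}(2\theta_{i,j}-1)^2$, with equality iff $\lambda_{i,j}\propto 2\theta_{i,j}-1$. Since both the aggregation rule (\ref{eq:aggregation}) and the bound are invariant under a common positive rescaling of the weights, the choice $\lambda_{i,j}=2\theta_{i,j}-1$ is a valid optimizer, and substituting it gives $\mathrm{Pr}[\widehat{L}_j\neq l_j]\le\exp\!\big(-\tfrac12\sum_{i\in S_j}(2\theta_{i,j}-1)^2\big)$, which is the claimed inequality.

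I expect the main obstacle to be stating the optimality claim correctly: the ``upper bound'' being minimized is the Chernoff--Hoeffding bound, so one must first minimize it over the free parameter $t$ for a generic weight vector and only then apply Cauchy--Schwarz to pick the best weights for that already-$t$-optimized bound; keeping these two nested optimizations (and their quantifiers) straight, and noting the scale-invariance so that $\lambda_{i,j}=2\theta_{i,j}-1$ rather than some positive multiple is the representative returned by Algorithm \ref{al:aggregation}, is the delicate part. A secondary point is that the argument needs Hoeffding's lemma together with the mutual independence of the $L_{i,j}$'s, the latter being only implicit in the model.
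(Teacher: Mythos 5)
Your proposal is correct and follows essentially the same route as the paper: bound the error probability by the Chernoff--Hoeffding estimate $\exp\bigl(-\tfrac12\bigl(\sum_i\lambda_{i,j}(2\theta_{i,j}-1)\bigr)^2/\sum_i\lambda_{i,j}^2\bigr)$ and then apply Cauchy--Schwarz to show the optimal weights satisfy $\lambda_{i,j}\propto 2\theta_{i,j}-1$. The only differences are presentational: you derive the Hoeffding bound from the moment-generating function rather than citing it, and you unify the two cases $l_j=\pm1$ via the event $\{l_j\sum_i\lambda_{i,j}L_{i,j}\le 0\}$ where the paper conditions on $l_j$ separately.
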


\begin{proof}
We denote $X_{i,j}$ as the random variable for worker $w_i$'s weighted label about task $\tau_j$, i.e., $X_{i,j}=\lambda_{i,j} l_j$ with probability $\theta_{i,j}$, and $X_{i,j}=-\lambda_{i,j} l_j$ with probability $1-\theta_{i,j}$. Then, we define $X_j=\sum_{i:w_i\in\mathcal{S_W},\tau_j\in\Gamma_i} X_{i,j}$, and thus, $\mathbb{E}[X_j]=\sum_{i:w_i\in\mathcal{S_W},\tau_j\in\Gamma_i}\mathbb{E}[X_{i,j}]=\sum_{i:w_i\in\mathcal{S_W},\tau_j\in\Gamma_i}l_j\lambda_{i,j}(2\theta_{i,j}-1)$.

The error probability of the aggregated result can be calculated as $\text{Pr}[\widehat{L}_j\not=l_j]=\text{Pr}[X_j<0|l_j=1]\text{Pr}[l_j=1]+\text{Pr}[X_j\geq 0|l_j=-1]\text{Pr}[l_j=-1]$, and based on the Chernoff-Hoeffding bound, we have 
\begin{equation*}
\scalefont{0.84}
\begin{aligned}
\text{Pr}[X_j<0|l_j=1]&=\text{Pr}[\mathbb{E}[X_j]-X_j>\mathbb{E}[X_j]|l_j=1]\\
&\leq\exp\Bigg(-\frac{2(\mathbb{E}[X_j|l_j=1])^2}{\sum_{i:w_i\in\mathcal{S_W},\tau_j\in\Gamma_i}(2\lambda_{i,j})^2}\Bigg)\\
&=\exp\Bigg(-\frac{\big(\sum_{i:w_i\in\mathcal{S_W},\tau_j\in\Gamma_i}\lambda_{i,j}(2\theta_{i,j}-1)\big)^2}{2\sum_{i:w_i\in\mathcal{S_W},\tau_j\in\Gamma_i}\lambda_{i,j}^2}\Bigg).
\end{aligned}
\end{equation*}

Then, we define the vector $\boldsymbol{\uplambda}_j=[\lambda_{i,j}]$ for every executed task $\tau_j$, which contains every $\lambda_{i,j}$ such that $w_i\in\mathcal{S_W}$, and $\tau_j\in\Gamma_i$. Therefore, minimizing the upper bound of $\text{Pr}[X_j<0|l_j=1]$ is equivalent to finding the vector $\boldsymbol{\uplambda}_j$ that maximizes the function $f(\boldsymbol{\uplambda}_j)$ defined as 

\begin{equation*}\label{eq:flambda}
\begin{aligned}
f(\boldsymbol{\uplambda}_j)=\frac{\big(\sum_{i:w_i\in\mathcal{S_W},\tau_j\in\Gamma_i}\lambda_{i,j}(2\theta_{i,j}-1)\big)^2}{\sum_{i:w_i\in\mathcal{S_W},\tau_j\in\Gamma_i}\lambda_{i,j}^2}.
\end{aligned}
\end{equation*}

Based on the Cauchy-Schwarz inequality, we have 

\begin{equation*}
\scalefont{0.91}
\begin{aligned}
f(\boldsymbol{\uplambda}_j)&\leq\frac{\big(\sum_{i:w_i\in\mathcal{S_W},\tau_j\in\Gamma_i}\lambda_{i,j}^2\big)\big(\sum_{i:w_i\in\mathcal{S_W},\tau_j\in\Gamma_i}(2\theta_{i,j}-1)^2\big)}{\sum_{i:w_i\in\mathcal{S_W},\tau_j\in\Gamma_i}\lambda_{i,j}^2}\\
&=\sum_{i:w_i\in\mathcal{S_W},\tau_j\in\Gamma_i}(2\theta_{i,j}-1)^2,
\end{aligned}
\end{equation*}
and equality is achieved if and only if $\lambda_{i,j}\propto 2\theta_{i,j}-1$. Thus, 
\begin{equation}\label{eq:conprob1}
\scalefont{0.84}
\begin{aligned}
\text{Pr}[X_j<0|l_j=1]\leq\exp\Bigg(-\frac{\sum_{i:w_i\in\mathcal{S_W},\tau_j\in\Gamma_i}(2\theta_{i,j}-1)^2}{2}\Bigg).
\end{aligned}
\end{equation}

Similarly, from the Chernoff-Hoeffding bound, we have 
\begin{equation*}
\scalefont{0.78}
\begin{aligned}
\text{Pr}[X_j\geq0|l_j=-1]\leq\exp\Bigg(-\frac{\big(\sum_{i:w_i\in\mathcal{S_W},\tau_j\in\Gamma_i}\lambda_{i,j}(2\theta_{i,j}-1)\big)^2}{2\sum_{i:w_i\in\mathcal{S_W},\tau_j\in\Gamma_i}\lambda_{i,j}^2}\Bigg).
\end{aligned}
\end{equation*}

The upper bound of $\text{Pr}[X_j>0|l_j=-1]$ is also minimized if and only if $\lambda_{i,j}\propto 2\theta_{i,j}-1$ based on the Cauchy-Schwarz inequality, and we have 
\begin{equation}\label{eq:conprob2}
\scalefont{0.8}
\begin{aligned}
\text{Pr}[X_j\geq 0|l_j=-1]\leq\exp\Bigg(-\frac{\sum_{i:w_i\in\mathcal{S_W},\tau_j\in\Gamma_i}(2\theta_{i,j}-1)^2}{2}\Bigg).
\end{aligned}
\end{equation}

From Inequality (\ref{eq:conprob1}) and (\ref{eq:conprob2}), we have that when $\lambda_{i,j}=2\theta_{i,j}-1$, the upper bound of $\textnormal{Pr}[\widehat{L}_j\not=l_j]$ is minimized, and 
\begin{equation*}
\begin{aligned}
\textnormal{Pr}[\widehat{L}_j\not=l_j]\leq\exp\Bigg(-\frac{\sum_{i:w_i\in\mathcal{S_W},\tau_j\in\Gamma_i}(2\theta_{i,j}-1)^2}{2}\Bigg),
\end{aligned}
\end{equation*}
which exactly proves Theorem \ref{theo:accuracy}. 
\end{proof}

By Theorem \ref{theo:accuracy}, we have that the data aggregation mechanism proposed in Algorithm \ref{al:aggregation} upper bounds the error probability $\textnormal{Pr}[\widehat{L}_j\not=l_j]$ by $\exp\big(-\frac{1}{2}\sum_{i:w_i\in\mathcal{S_W},\tau_j\in\Gamma_i}(2\theta_{i,j}-1)^2\big)$, which in fact is the minimum upper bound of this probability. Next, we derive Corollary \ref{cor:accurcay}, which is directly utilized in our design of the incentive mechanism in Section \ref{sec:incentivemechanism}. 

\begin{myCor}\label{cor:accurcay}
For every executed task $\tau_j$, the data aggregation mechanism proposed in Algorithm \ref{al:aggregation} satisfies that if 
\begin{equation}\label{eq:errConstraintinCor}
\begin{aligned}
\sum_{i:w_i\in\mathcal{S_W},\tau_j\in\Gamma_i}(2\theta_{i,j}-1)^2\geq 2\ln\bigg(\frac{1}{\beta_j}\bigg),
\end{aligned}
\end{equation}
then $\textnormal{Pr}[\widehat{L}_j\not=l_j]\leq\beta_j$, i.e., $\beta_j$-accuracy is satisfied for this task $\tau_j$, where $\beta_j\in(0,1)$ is a platform chosen parameter. Moreover, we define $\boldsymbol{\upbeta}$ as the vector $(\beta_1,\cdots, \beta_M)$. 
\end{myCor}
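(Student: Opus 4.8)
The plan is to derive this directly from Theorem~\ref{theo:accuracy}, since the corollary is essentially a restatement of that bound in a form that makes explicit when the $\beta_j$-accuracy requirement of Definition~\ref{def:accuracy} is met. First I would invoke Theorem~\ref{theo:accuracy} for the fixed executed task $\tau_j$ to obtain
\begin{equation*}
\textnormal{Pr}[\widehat{L}_j\not=l_j]\leq\exp\Bigg(-\frac{\sum_{i:w_i\in\mathcal{S_W},\tau_j\in\Gamma_i}(2\theta_{i,j}-1)^2}{2}\Bigg),
\end{equation*}
which is valid because Algorithm~\ref{al:aggregation} uses exactly the weights $\lambda_{i,j}=2\theta_{i,j}-1$ that attain the minimum upper bound established there.

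Next I would substitute the hypothesis~(\ref{eq:errConstraintinCor}) into the exponent. Since $\sum_{i:w_i\in\mathcal{S_W},\tau_j\in\Gamma_i}(2\theta_{i,j}-1)^2\geq 2\ln(1/\beta_j)$, dividing by $2$ and negating gives
\begin{equation*}
-\frac{\sum_{i:w_i\in\mathcal{S_W},\tau_j\in\Gamma_i}(2\theta_{i,j}-1)^2}{2}\leq-\ln\bigg(\frac{1}{\beta_j}\bigg)=\ln\beta_j.
\end{equation*}
Then, because $\exp(\cdot)$ is monotonically increasing, applying it to both sides yields $\exp\big(-\tfrac12\sum_i(2\theta_{i,j}-1)^2\big)\leq\exp(\ln\beta_j)=\beta_j$. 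Chaining this with the bound from Theorem~\ref{theo:accuracy} gives $\textnormal{Pr}[\widehat{L}_j\not=l_j]\leq\beta_j$, which is precisely the definition of $\beta_j$-accuracy for task $\tau_j$. The definition of the vector $\boldsymbol{\upbeta}=(\beta_1,\ldots,\beta_M)$ is just notation and needs no argument.

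I do not anticipate any real obstacle here: all the probabilistic work (the Chernoff--Hoeffding step and the Cauchy--Schwarz optimization of the weights) has already been done in Theorem~\ref{theo:accuracy}, so the corollary reduces to inverting a monotone function. The only thing to be careful about is that $\beta_j\in(0,1)$, so $\ln(1/\beta_j)>0$ and $\ln\beta_j<0$ are well defined and the inequality directions are as claimed; this is guaranteed by the standing assumption on $\beta_j$. I would keep the proof to a few lines, essentially the display computations above together with a sentence pointing back to Definition~\ref{def:accuracy}.
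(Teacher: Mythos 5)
Your proposal is correct and follows essentially the same route as the paper: both proofs simply combine the upper bound from Theorem~\ref{theo:accuracy} with the observation that $\exp\big(-\tfrac12\sum_{i:w_i\in\mathcal{S_W},\tau_j\in\Gamma_i}(2\theta_{i,j}-1)^2\big)\leq\beta_j$ is equivalent to Inequality~(\ref{eq:errConstraintinCor}). The only cosmetic difference is that the paper states the equivalence starting from the target bound while you chain the inequalities forward; the content is identical.
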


\begin{proof}
By setting the upper bound of $\textnormal{Pr}[\widehat{L}_j\not=l_j]$ given in Theorem \ref{theo:accuracy} to be no greater than $\beta_j\in(0,1)$, we have 
\begin{equation*}
\begin{aligned}
\exp\Bigg(-\frac{\sum_{i:w_i\in\mathcal{S_W},\tau_j\in\Gamma_i}(2\theta_{i,j}-1)^2}{2}\Bigg)\leq\beta_j,
\end{aligned}
\end{equation*}
which is equivalent to 
\begin{equation}\label{eq:errConstraintinProof}
\begin{aligned}
\sum_{i:w_i\in\mathcal{S_W},\tau_j\in\Gamma_i}(2\theta_{i,j}-1)^2\geq 2\ln\bigg(\frac{1}{\beta_j}\bigg). 
\end{aligned}
\end{equation}
Hence, together with Theorem \ref{theo:accuracy}, we have that Inequality (\ref{eq:errConstraintinProof}) indicates that $\textnormal{Pr}[\widehat{L}_j\not=l_j]\leq\beta_j$. 
\end{proof}

Corollary \ref{cor:accurcay} gives us a sufficient condition, represented by Inequality (\ref{eq:errConstraintinCor}), that the set of winning workers $\mathcal{S_W}$ selected by the incentive mechanism (proposed in Section \ref{sec:incentivemechanism}) should satisfy so as to achieve $\beta_j$-accuracy for each executed task $\tau_j$. 

\subsection{Incentive Mechanism}\label{sec:incentivemechanism}
Now, we introduce the design details of CENTURION's incentive mechanism, including its mathematical formulation, the hardness proof of the formulated integer program, the proposed mechanism, as well as the corresponding analysis.  
\subsubsection{Mathematical Formulation}
~

As mentioned in Section \ref{sec:auctionmodel}, CENTURION's incentive mechanism is based on the MELON  double auction defined in Definition \ref{def:auction}. In this paper, we aim to design a MELON double auction that \textit{maximizes the social welfare}, while guaranteeing \textit{satisfactory data aggregation accuracy}. The formal mathematical formulation of its winner selection problem is provided in the following \textit{MELON double auction social welfare maximization (MELON-SWM) problem.}
~\\

\textbf{MELON-SWM Problem:}
\begin{small}
\begin{align}
\max&\sum_{j:\tau_j\in\mathcal{T}}a_jy_j-\sum_{i:w_i\in\mathcal{W}}b_ix_i\\
\text{s.t.}& \sum_{i:w_i\in\mathcal{W},\tau_j\in\Gamma_i}(2\theta_{i,j}-1)^2 x_i\geq 2\ln\bigg(\frac{1}{\beta_j}\bigg)y_j,~\forall\tau_j\in\mathcal{T}\label{eq:constraint}\\
&x_i, y_j\in\{0,1\},~\forall w_i\in\mathcal{W},~\tau_j\in\mathcal{T}
\end{align}
\end{small}
\vspace{-0.4cm}

\textbf{Constants.} The MELON-SWM problem takes as inputs the task set $\mathcal{T}$, the worker set $\mathcal{W}$, the requesters' and workers' bid profile $\mathbf{a}$ and $\mathbf{b}$, the profile of workers' interested task sets $\boldsymbol{\Gamma}$, the workers' reliability level matrix $\boldsymbol{\uptheta}$, and the $\boldsymbol{\upbeta}$ vector. 

\textbf{Variables.} On one hand, the MELON-SWM problem has a vector of $M$ binary variables, denoted as $\mathbf{y}=(y_1,\cdots,y_M)$. Any $y_j=1$ indicates that task $\tau_j$ will be executed, and thus, requester $r_j$ is a winning requester (i.e., $r_j\in\mathcal{S_R}$), whereas $y_j=0$ means $r_j\not\in\mathcal{S_R}$. On the other hand, the problem has another vector of $N$ binary variables, denoted as $\mathbf{x}=(x_1,\cdots,x_N)$, where $x_i=1$ indicates that worker $w_i$ is a winning worker (i.e., $w_i\in\mathcal{S_W}$), and $x_i=0$ means $w_i\not\in\mathcal{S_W}$. 

\textbf{Objective function.} The objective function satisfies that $\sum_{j:\tau_j\in\mathcal{T}}a_jy_j-\sum_{i:w_i\in\mathcal{W}}b_ix_i=\sum_{j:r_j\in\mathcal{S_R}}a_j-\sum_{i:w_i\in\mathcal{S_W}}b_i$, which is exactly the social welfare defined in Definition \ref{def:socialwelfare} based on the requesters' and workers' bids. 

\textbf{Constraints.} For each task $\tau_j$, Constraint (\ref{eq:constraint}) naturally holds, if $y_j=0$. When $y_j=1$, it is equivalent to Inequality (\ref{eq:errConstraintinCor}) given in Corollary \ref{cor:accurcay}, which specifies the condition that the set of selected winning workers $\mathcal{S_W}$ should satisfy in order to guarantee $\beta_j$-accuracy for task $\tau_j$. To simplify the presentation, we introduce the following notations, namely $q_{i,j}=(2\theta_{i,j}-1)^2$, $\mathbf{q}=[q_{i,j}]\in[0,1]^{N\times M}$, $Q_j=2\ln\big(\frac{1}{\beta_j}\big)$, and $\mathbf{Q}=[Q_j]\in[0,+\infty)^{M\times 1}$. Thus, Constraint (\ref{eq:constraint}) can be simplified as 
\begin{equation}\label{eq:constraintsimple}
\begin{aligned}
\sum_{i:w_i\in\mathcal{W},\tau_j\in\Gamma_i}q_{i,j} x_i\geq Q_j y_j,~\forall\tau_j\in\mathcal{T}.
\end{aligned}
\end{equation}
Besides, we say a task $\tau_j$ is \textit{covered} by a solution, if $y_j=1$. 
\subsubsection{Hardness Proof}
~
%We consider a special case of the MELON-SWM problem, referred to as the sMELON-SWM problem, where every worker $w_i$ has identical $q_{i,j}$'s for all tasks, i.e., $q_{i,j}=q_i$ for every task $\tau_j\in\Gamma_i$. Obviously, the MELON-SWM problem is, at least, as hard as the sMELON-SWM problem. Then, we prove the NP-hardness of the sMELON-SWM problem by performing a polynomial-time reduction from the 3SAT(5) problem, formally defined in Definition \ref{def:3SAT}.

We prove the NP-hardness of the MELON-SWM problem by performing a polynomial-time reduction from the 3SAT(5) problem which is formally defined in Definition \ref{def:3SAT}.

\begin{myDef}[3SAT(5) Problem]\label{def:3SAT}
In a 3SAT(5) problem, we are given a set $\mathcal{O}=\{z_1,\cdots,z_n\}$ of $n$ Boolean variables, and a collection $C_1,\cdots,C_m$ of $m$ clauses. Each clause is an OR of exactly three literals, and every literal is either a variable of $\mathcal{O}$ or its negation. Moreover, every variable participates in exactly 5 clauses. Therefore, $m=\frac{5n}{3}$. Given some constant $0<\epsilon<1$, a 3SAT(5) instance $\varphi$ is a Yes-Instance if there is an assignment to the variables of $\mathcal{O}$ satisfying all clauses, whereas it is a No-Instance (with respect to $\epsilon$), if every assignment to the variables satisfies at most $(1-\epsilon)m$ clauses. An algorithm $\mathcal{A}$ distinguishes between the Yes- and No-instances of the problem, if, given a Yes-Instance, it returns a ``YES'' answer, and given a No-Instance it returns a ``NO'' answer.
\end{myDef}

Regarding the hardness of the 3SAT(5) problem, we introduce without proof the following well-known Lemma \ref{theo:SATHard}, which is a consequence of the PCP theorem \cite{SAroraJACM98}.

\begin{myLemma}\label{theo:SATHard}
There is some constant $0<\epsilon<1$, such that distinguishing between the Yes- and No-instances of the 3SAT(5) problem, defined with respect to $\epsilon$, is NP-complete. 
\end{myLemma}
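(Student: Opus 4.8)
The plan is to obtain Lemma~\ref{theo:SATHard} from the PCP theorem \cite{SAroraJACM98} in two stages: first I would pass to the ``gap'' version of 3SAT, in which one must distinguish fully satisfiable formulas from those in which no assignment satisfies more than a $(1-\epsilon_0)$ fraction of the clauses; then I would apply a degree-reduction (``expander replacement'') so that every variable occurs in exactly five clauses and every clause still has exactly three literals, while a smaller but still positive gap $\epsilon$ survives. Composing these polynomial-time reductions with the Cook--Levin reduction gives NP-hardness of separating the Yes- and No-instances of 3SAT(5) with respect to that $\epsilon$, and membership in NP is immediate since a Yes-instance has a satisfying assignment as a polynomial-size certificate; together these yield NP-completeness.

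For the first stage, recall that the PCP theorem is equivalent to the existence of a constant $\epsilon_0\in(0,1)$ for which it is NP-hard to distinguish satisfiable 3CNF formulas from ones where every assignment leaves at least an $\epsilon_0$ fraction of the clauses unsatisfied: one composes the verifier of an $\mathrm{NP}=\mathrm{PCP}(O(\log n),O(1))$ proof system with a gadget that rewrites each constant-size local check as a constant number of 3-clauses. A routine normalization then lets me assume each clause has exactly three literals --- pad shorter clauses by repeating a literal, split longer ones with auxiliary variables --- at the cost of only a constant-factor change in the gap. What remains is to bound, and then equalize, the number of occurrences of each variable.

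The heart of the argument, and the step I expect to be the main obstacle, is the bounded-occurrence reduction. For a variable $z$ occurring $d$ times I would introduce fresh copies $z^{(1)},\dots,z^{(d)}$, use the $k$-th copy in the $k$-th occurrence of $z$, and add equality constraints $z^{(a)}\Leftrightarrow z^{(b)}$ along the edges of a constant-degree expander $G_d$ on $\{1,\dots,d\}$, each equality being expressible by two 2-clauses and hence, after padding, by 3-clauses. Completeness is clear: propagating a satisfying assignment of the original formula to all copies satisfies everything. Soundness uses edge expansion: if an assignment splits the copies of $z$ into a ``true'' set $S$ and a ``false'' set $\bar S$, then at least $c\cdot\min(|S|,|\bar S|)$ equality edges are violated for a constant $c>0$ depending only on the expander; summing over all variables, any assignment either essentially agrees with a single consistent assignment of the original variables --- which, starting from a No-instance, leaves an $\epsilon_0$ fraction of the original clauses unsatisfied --- or violates a constant fraction of the new equality clauses. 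Either way a constant fraction of the new clause set is unsatisfied, producing a new gap $\epsilon>0$, and since $G_d$ has constant degree each copy now occurs a bounded number of times. The fiddly part is tracking these constants and then applying a final local gadget (merging or further splitting occurrences, plus padding) to force every variable to exactly five occurrences and every clause to exactly three literals while changing $\epsilon$ by only a constant factor.

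Once this is in place, the reduction chain ``arbitrary NP language $\to$ 3SAT $\to$ gap-3SAT $\to$ 3SAT(5) with gap $\epsilon$'' establishes NP-hardness of distinguishing Yes- from No-instances of 3SAT(5); combined with the NP certificate for Yes-instances noted above, this is exactly the NP-completeness asserted by the lemma.
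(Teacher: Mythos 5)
The paper deliberately states Lemma~\ref{theo:SATHard} \emph{without proof}, as a well-known consequence of the PCP theorem \cite{SAroraJACM98}, and your proposal is exactly the standard derivation of that consequence: gap-3SAT from the PCP theorem, followed by the expander-based occurrence-reduction and the final normalization to exactly five occurrences per variable and three literals per clause. The sketch is correct in outline and consistent with the provenance the paper intends, so there is nothing to reconcile beyond the (acknowledged) bookkeeping in the last normalization step.
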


Next, we introduce Theorem \ref{theo:reduction} and \ref{theo:welfarematching} that will be utilized to prove the NP-hardness of the MELON-SWM problem. 

\begin{myTheo}\label{theo:reduction}
Any 3SAT(5) instance is polynomial-time reducible to an instance of the MELON-SWM problem. 
\end{myTheo}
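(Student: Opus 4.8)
I would prove this by exhibiting an explicit polynomial-time construction that turns a 3SAT(5) instance $\varphi$ (Definition \ref{def:3SAT}) into a MELON-SWM instance, using a ``variable gadget'' for each Boolean variable and a ``clause task'' for each clause, and calibrating the bids so that every positive-welfare solution is forced to encode a \emph{consistent} truth assignment. The precise correspondence between the 3SAT(5) value and the MELON-SWM optimum is what the subsequent Theorem \ref{theo:welfarematching} records; combined with Lemma \ref{theo:SATHard} it then yields the NP-hardness. For Theorem \ref{theo:reduction} itself the deliverable is the construction together with a verification that it runs in polynomial time.

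\textbf{The construction.} Let $\varphi$ have variables $z_1,\dots,z_n$ and clauses $C_1,\dots,C_m$ with $m=\frac{5n}{3}$. For each $z_k$ I would create two workers, $w_k^{+}$ (standing for the literal $z_k$) and $w_k^{-}$ (standing for $\neg z_k$), each with bid $b=3$, and one \emph{variable task} $\tau_k$ with bid $a_k=4$; both $w_k^{+}$ and $w_k^{-}$ are interested in $\tau_k$ with reliability level $1$ on it (so the corresponding coefficients $q_{i,j}$ equal $1$), and $\tau_k$ is assigned accuracy parameter $e^{-1/2}\in(0,1)$ (so $Q_{\tau_k}=2\ln(e^{1/2})=1$). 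For each clause $C_l$ I would create one \emph{clause task} $\widetilde{\tau}_l$ with bid $\widetilde{a}_l=1$ and accuracy parameter $e^{-1/2}$ (so $Q_{\widetilde{\tau}_l}=1$); the workers interested in $\widetilde{\tau}_l$ are exactly the three literal workers whose literal occurs in $C_l$, each again with reliability level $1$. With these numbers, Constraint (\ref{eq:constraintsimple}) for any task $\tau$ degenerates to ``at least one worker interested in $\tau$ is selected''. The instance has $2n$ workers, $n+m$ tasks, at most three interested workers per task, and $O(1)$-sized constants, so it is produced in time polynomial in $|\varphi|$; this already establishes the claimed reducibility.

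\textbf{Correspondence (the content feeding Theorem \ref{theo:welfarematching}).} I would then argue that every optimal solution of the constructed instance is \emph{canonical}: for each $k$ it selects exactly one of $w_k^{+},w_k^{-}$ and covers $\tau_k$, hence encodes a truth assignment $\phi$, and it covers $\widetilde{\tau}_l$ precisely when $\phi$ satisfies $C_l$. This follows from three local exchange arguments. (i) If a worker is selected, flipping the corresponding $y$-variables to $1$ only raises the objective, so $y_k=1$ whenever some $w_k^{\pm}$ is chosen. (ii) Leaving $\tau_k$ uncovered is strictly suboptimal, since adding a literal worker for $z_k$ and covering $\tau_k$ changes the objective by at least $a_k-b=1>0$; hence every $\tau_k$ is covered, i.e.\ at least one $w_k^{\pm}$ is chosen. (iii) Choosing \emph{both} $w_k^{+}$ and $w_k^{-}$ is strictly suboptimal: $z_k$ occurs in exactly five clauses, so one of its polarities occurs in at most $\lfloor 5/2\rfloor=2$ of them, and deleting that literal's worker loses at most $2$ clause tasks while saving the bid $b=3$, a net change of at least $1>0$. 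Consequently the optimum equals $n+\max_{\phi}|\{\,l:\phi\models C_l\,\}|$: a Yes-instance gives optimum $n+m=\tfrac{8}{3}n$, whereas a No-instance with respect to $\epsilon$ gives optimum at most $n+(1-\epsilon)m=\tfrac{8-5\epsilon}{3}n$, a multiplicative gap of $1-\tfrac{5\epsilon}{8}<1$.

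\textbf{Main obstacle.} The delicate point is not the encoding but the bid calibration that enforces consistency. Because a single literal may appear in up to five clauses, a careless choice of bids would make it profitable to select both literal workers of a variable, which would decouple the MELON-SWM optimum from truth assignments and destroy the gap. The resolution above is the combinatorial observation that the two polarities of a 3SAT(5) variable \emph{share} its five clauses, so whichever polarity one drops is worth at most two clause tasks; taking the worker bid strictly above that value ($b=3>2$) while keeping the variable-task bid exactly one unit higher ($a_k=b+1=4$) simultaneously forbids over-selection (doubling loses at least $b-2\ge 1$) and under-selection (an uncovered variable task loses $a_k-b=1$). This two-sided tightness is exactly what makes the optimum track $\max_{\phi}|\{\,l:\phi\models C_l\,\}|$ and lets Theorem \ref{theo:welfarematching} transport the 3SAT(5) gap of Lemma \ref{theo:SATHard} into the hardness of MELON-SWM.
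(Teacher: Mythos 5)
Your construction is correct as far as Theorem \ref{theo:reduction} itself is concerned: it is a valid polynomial-time reduction, the degenerate form of Constraint (\ref{eq:constraintsimple}) is computed correctly ($q_{i,j}=1$, $\beta_j=e^{-1/2}$ gives $Q_j=1$), and the three exchange arguments (cover $y_j$ whenever profitable, never leave a variable task uncovered since $a_k-b=1>0$, never pick both polarities since the cheaper polarity covers at most $\lfloor 5/2\rfloor=2$ clause tasks worth $1$ each against a saved bid of $3$) do force every optimal solution to be canonical, so the optimum equals $n+\max_\phi|\{l:\phi\models C_l\}|$. However, this is a genuinely different reduction from the paper's, and the difference matters downstream. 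The paper builds a far more elaborate gadget (one set $\Gamma(C_k,\alpha)$ per clause--satisfying-assignment pair with the large bid $3+Y+Z$, the cyclic tasks $E_2$ of value $Y$ linking consecutive clauses, the clause tasks $E_3$ of value $Z$, and a dummy worker covering a single cheap task $\tau^*$ of value $X=\frac{\epsilon m}{100}$) precisely so that Theorem \ref{theo:welfarematching} yields a \emph{sign} gap: optimum exactly $X>0$ on Yes-instances versus at most $0$ on No-instances. That sign gap is what powers Theorem \ref{theo:NPHard}'s claim that \emph{no} factor-$\phi$ approximation exists for any $\phi$, since any multiplicative approximation of a positive optimum stays positive. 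Your reduction instead produces a Yes/No gap of $\frac{8}{3}n$ versus $\frac{(8-5\epsilon)}{3}n$ --- a constant multiplicative gap bounded away from $0$. This suffices for NP-hardness and even APX-hardness, but it cannot serve as "the content feeding Theorem \ref{theo:welfarematching}" as you claim: Theorem \ref{theo:welfarematching} as stated (welfare $X$ versus welfare at most $0$) is false for your instance, and the total inapproximability of Theorem \ref{theo:NPHard} would not follow from it. So treat your argument as an alternative, more elementary proof of NP-hardness, not as a drop-in replacement for the paper's reduction; the paper's heavier machinery is the price of the stronger inapproximability conclusion.
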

\begin{proof}
The reduction goes as follows. Assume there is a 3SAT(5) instance $\varphi$ on $n$ variables and $m$ clauses. We define 3 parameters: $X=\frac{\epsilon m}{100}$ ($0<\epsilon<1$), $Y=mnX$, and $Z=mnY$. The exact values of $Y$ and $Z$ are not important. We just need to ensure $Z\gg Y\gg X$. We construct an instance of the MELON-SWM problem corresponding to $\varphi$, by defining the task set $\mathcal{T}$, and the profile of workers' interested task sets $\boldsymbol{\Gamma}$. 

Out of the 8 possible assignments to the variables of some clause $C_k\in\varphi$, exactly one does not satisfy $C_k$. Let $A_k$ be the set of the remaining 7 assignments. We define a set of tasks $\Gamma(C_k,\alpha)$ for each clause $C_k$ and assignment $\alpha\in A_k$, let $\boldsymbol{\Gamma}=[\Gamma(C_k, \alpha)]$ for each clause $C_k\in\varphi$ and assignment $\alpha\in A_k$, set the $q_{i,j}$ value of each worker $w_i$ and task $\tau_j\in\Gamma_i$ as $q_{i,j}=1$, and set her bid as $b_i=3+Y+Z$. We also create a dummy worker $w_0$, with $q_0=1$, $b_0=0$, and $\Gamma_0$ being her interested task set. We start with all set $\Gamma(C_k, \alpha)$'s being empty, gradually define the tasks, and specify which sets they belong to. The task set $\mathcal{T}$ consists of 4 subsets.

\begin{itemize}[leftmargin=*]\compresslist
\item The 1st subset $E_1$ contains a task $\tau(z_l, \gamma)$ for each variable $z_l\in\mathcal{O}$ and assignment $\gamma\in\{T,F\}$ to this variable. $\tau(z_l, \gamma)$ belongs to each set $\Gamma(C_k , \alpha)$, such that $z_l$ participates in $C_k$, and the assignment $\alpha$ to the variables of $C_k$ gives assignment $\gamma$ to $z_l$. The $Q_j$ value of the task $\tau_j$ corresponding to $\tau(z_l, \gamma)$ is set as $5-\text{the number of the clauses containing~}z_l$, and the value $v_j$ of this task is set as $5$.
\item The 2nd subset $E_2$ contains $m$ tasks $\tau_1,\cdots,\tau_m$. Each $\tau_k\in E_2$ belongs to all sets corresponding to $C_k$ and $C_{k+1}$, i.e., $\tau_k$ belongs to all sets $\{\Gamma(C_k, \alpha)|\alpha\in A_k\}\cup\{\Gamma(C_{k+1},\alpha')|\alpha'\in A_{k+1}\}$ with the subscripts being modulo $m$. The $Q_k$ value of each such $\tau_k$ is set as $2$, and its value $v_k$ is set as $Y$. 
\item The 3rd subset $E_3$ contains a task $\tau(C_k)$ for each clause $C_k$, and $\tau(C_k)$ belongs to set $\Gamma(C_k,\alpha)$ for each $\alpha\in A_k$. The $Q_j$ value of the task $\tau_j$ corresponding to $\tau(C_k)$ is set as 1, and its value $v_j$ is set as $Z$. 
\item The 4th subset $E_4$ contains a single task $\tau^*$, whose $Q_j$ value is set as $1$ and value $v_j$ is set as $X$. The task $\tau^*$ only belongs to set $\Gamma_0$.
\end{itemize}

This finishes the description of the reduction. Clearly, given a 3SAT(5) instance $\varphi$, we can construct an instance of the MELON-SWM problem in time polynomial in $n$. 
\end{proof}
We now analyze the optimal social welfare for an instance of the MELON-SWM problem that corresponds to a 3SAT(5) instance $\varphi$, when $\varphi$ is a Yes- or No-Instance. Note that the following analysis uses the same reduction as in Theorem \ref{theo:reduction}. 

\begin{myTheo}\label{theo:welfarematching}
If the 3SAT(5) instance $\varphi$ is a Yes-Instance, then there is a solution to the resulting instance of the MELON-SWM problem whose social welfare is $X$. If $\varphi$ is a No-Instance, then any solution has social welfare at most $0$. 
\end{myTheo}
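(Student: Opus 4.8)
The plan is to prove the two halves of the statement by rather different means. The Yes-Instance half is a direct construction followed by a balancing computation, while the No-Instance half is the substance of the theorem and requires a charging argument that is sensitive to the scale separation $Z\gg Y\gg X\gg 1$ built into the reduction; I expect this second half to be the main obstacle.

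For the Yes-Instance direction, let $\sigma$ be a satisfying assignment of $\varphi$. I would build the canonical solution: for each clause $C_k$ select the worker indexed by the pair $(C_k,\alpha_k)$, where $\alpha_k$ is $\sigma$ restricted to the variables of $C_k$ --- legal since $\sigma$ satisfies $C_k$, so $\alpha_k\in A_k$ --- and also select the dummy worker $w_0$. I would then check feasibility and set $y_j=1$ on exactly the following tasks: every $E_3$-task $\tau(C_k)$ (its unique selected interested worker supplies weight $1=Q_j$), every $E_2$-task $\tau_k$ (the workers of $C_k$ and $C_{k+1}$ supply weight $2=Q_j$), every $E_1$-task consistent with $\sigma$ (the workers of the clauses containing that variable supply at least its threshold), and $\tau^*$ (covered by $w_0$); the $E_1$-tasks inconsistent with $\sigma$ receive weight $0$ and are left uncovered. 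Finally I would evaluate the objective by a per-worker accounting in which the bid $3+Y+Z$ of each clause-worker is exactly matched by the values it is responsible for --- the three incident $E_1$-tasks it touches (each shared among the clause-workers that touch it), its two incident $E_2$-tasks (each shared with one neighbouring clause-worker), and its unique incident $E_3$-task --- so that the total value of the covered $E_1\cup E_2\cup E_3$ tasks precisely cancels the total worker cost $m(3+Y+Z)$, leaving only the value $X$ of $\tau^*$. Hence this solution has social welfare exactly $X$.

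For the No-Instance direction I would take an arbitrary feasible solution, with selected worker set $\mathcal{S_W}$ and covered task set, and show its social welfare is at most $0$. The key structural facts I would establish are: (i) covering the high-value $E_3$-task $\tau(C_k)$ forces at least one worker of the form $(C_k,\cdot)$ into $\mathcal{S_W}$, and since such a worker costs $3+Y+Z$ while $\tau(C_k)$ alone is worth only $Z$, a bought worker contributes non-negatively to welfare only when its incident $E_2$- and $E_1$-tasks are also being collected, exactly as in the Yes-case bookkeeping; (ii) the cyclic $E_2$-chain penalizes any deviation from ``exactly one worker per clause'' --- a clause with no selected worker forfeits two $E_2$-tasks and an $E_3$-task, while a clause with two selected workers wastes essentially a whole bid $3+Y+Z$; and (iii) the $E_1$-task thresholds, calibrated just large enough, forbid the selected workers from collecting both values of a variable at once, so $\mathcal{S_W}$ induces a well-defined near-proper assignment $\sigma'$ to the variables. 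Combining these, I would invoke the No-Instance property (Definition~\ref{def:3SAT}): $\sigma'$ leaves at least $\epsilon m$ clauses unsatisfied, and for each such $C_k$ the restriction of $\sigma'$ to $C_k$ is precisely the one pattern missing from $A_k$; hence on each of these $\ge\epsilon m$ clauses the solution is forced into one of --- dropping $\tau(C_k)$ (a loss of $Z$), handling $C_k$ by a worker inconsistent with $\sigma'$ (which unbalances the $E_1$-charging and leaves a constant amount of value uncompensated, typically forcing a further worker of cost $3+Y+Z$), or buying an extra worker there --- and in every case the per-bad-clause deficit is at least on the order of $Y$. Summing, the total deficit is $\Omega(\epsilon m\,Y)$, which dwarfs every available surplus --- the value $X=\epsilon m/100$ of $\tau^*$ and any $O(\epsilon m)$ gain from over-covering $E_1$-tasks --- so the social welfare cannot be positive.

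The hard part is making (ii) and (iii) rigorous: turning the slogan ``a profitable solution must encode a satisfying assignment'' into an inequality robust to every combination of cheating --- partial task coverage, inconsistent workers, and duplicated workers on a single clause, interleaved arbitrarily. I would organize this by partitioning the clauses into those the solution treats ``perfectly'' and those it treats ``imperfectly'', charging each imperfect clause a provable loss on the $Y$- or $Z$-scale that cannot be recouped elsewhere, and only at the end plugging in the $\epsilon m$ bound from Lemma~\ref{theo:SATHard}. The whole argument rests on the hierarchy $Z\gg Y\gg X$: it is exactly what guarantees that a loss incurred at one scale can never be offset by savings at a strictly smaller scale, so the bookkeeping must be carried out scale by scale rather than as a single global comparison.
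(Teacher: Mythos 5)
Your Yes-Instance half is essentially the paper's argument (canonical one-set-per-clause solution plus $\Gamma_0$; all of $E_2$, $E_3$, $\tau^*$ and exactly $n$ tasks of $E_1$ covered; values cancel costs leaving $X$), and it is fine. The No-Instance half, however, rests on a quantitative claim that is false: that each clause left unsatisfied by the decoded assignment $\sigma'$ incurs a deficit ``at least on the order of $Y$,'' giving a total deficit $\Omega(\epsilon m\,Y)$. The critical case is a solution that is \emph{structurally} canonical --- exactly one set $\Gamma(C_k,\alpha_k)$ per clause, so every $E_2$- and $E_3$-task is still covered and no extra worker is bought --- but whose chosen $\alpha_k$'s are mutually inconsistent on some variables. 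Such a solution pays nothing at the $Y$- or $Z$-scale for an unsatisfied clause; the only penalty is that an inconsistent (``bad'') variable loses its $E_1$-task, worth exactly $5$. That is a unit-scale loss, on the same scale as the surplus $X=\epsilon m/100$, so the hierarchy $Z\gg Y\gg X$ cannot decide the comparison; the theorem is actually decided by the constant $100$ in the definition of $X$. Your parenthetical ``typically forcing a further worker of cost $3+Y+Z$'' is where the plan breaks: no further worker is forced.

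The repair is the paper's route. Use the $Z$- and $Y$-scales only to pin down the structure: any positive-welfare solution must contain exactly one set per clause (your point (ii), the paper's Observation~\ref{theo:obs1}), whence at most one of $\tau(z_k,T),\tau(z_k,F)$ is covered per variable (Observation~\ref{theo:obs3}). Then do the unit-scale accounting once, globally: with the canonical structure the welfare equals $X$ minus $5$ times the number of bad variables, so positivity forces fewer than $\tfrac{\epsilon n}{100}$ bad variables (Observation~\ref{theo:obs4}). Finally, decode $\sigma'$ from the covered $E_1$-tasks; since every clause has exactly one chosen set $\Gamma(C_l,\alpha)$ with $\alpha\in A_l$ satisfying $C_l$, every clause all of whose variables are good is satisfied by $\sigma'$, and the bad variables spoil at most $5\cdot\tfrac{\epsilon n}{100}\le\tfrac{3\epsilon m}{100}$ clauses. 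That yields an assignment satisfying more than $(1-\epsilon)m$ clauses, contradicting the No-Instance hypothesis --- a contradiction argument rather than the direct deficit summation you propose, and one that never needs a per-unsatisfied-clause charge at the $Y$-scale.
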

\begin{proof}
Let $\varphi$ be a Yes-Instance, and $A$ be an assignment to the variables satisfying all clauses. We construct a solution $\mathcal{S}'$ to the MELON-SWM problem. Firstly, we add $\Gamma_0$ to $\mathcal{S}'$. Next, for each clause $C_k$, we add to $\mathcal{S}'$ the unique set $\Gamma(C_k,\alpha)$, where $\alpha$ is the assignment consistent with $A$. Then $|\mathcal{S}'|=m$, and the total cost of all sets is $(Y+Z+3)m$. We now analyze the number of tasks covered by $\mathcal{S}'$, and their values. Clearly, $\tau^*$ is covered by $\mathcal{S}'$, and it contributes $X$ to the solution value. 
\begin{itemize}[leftmargin=*]\compresslist
\item For each clause $C_k\in\varphi$, the unique task $\tau(C_k)\in E_3$ is covered. Thus, all tasks in $E_3$ are covered, and overall they contribute value $mZ$ to the solution. 
\item Consider some $\tau_k\in E_2$. $\mathcal{S}'$ contains one set corresponding to $C_k$ and $C_{k+1}$, respectively. Since $\tau_k$ belongs to both these sets, and its $Q_k$ is 2, it is covered. Thus, all tasks in $E_2$ are covered, and they contribute value $mY$ to the solution. 
\item Consider some variable $z_k\in\mathcal{O}$, and let $\gamma_k\in\{T,F\}$ be the assignment to $z_k$ under $A$. If $C_k$ is any clause containing $z_k$, and $\Gamma(C_k, \alpha)$ is the set that belongs to $\mathcal{S}'$, then $\alpha$ gives the assignment $\gamma_k$ to $z_k$. Thus, for all five clauses containing $z_k$, the corresponding sets chosen to $\mathcal{S}'$ contain $\tau(z_k,\gamma_k)$, and this task is covered. So the total number of tasks of $E_1$ covered by $\mathcal{S}'$ is $n$. Each such task contributes value 5, and the total value contributed by the tasks in $E_1$ is $5n=3m$. 
\end{itemize}

Therefore, the overall social welfare of this solution is $X+mZ+mY+3m-(Z+Y+3)m=X$. 

Assume now that $\varphi$ is a No-Instance, and let $\mathcal{S}'$ be any solution with positive social welfare. We can assume that $\Gamma_0\in\mathcal{S}'$, and task $\tau^*$ is covered by $\mathcal{S}'$. We then introduce the following observations, whose proofs are provided in the appendices. 

\begin{myObs}\label{theo:obs1}
For every clause $C_k$ of $\varphi$, at most one of the sets $\{\Gamma(C_k,\alpha)|\alpha\in A_k\}$ belongs to $\mathcal{S}'$, and $|\mathcal{S}'|=m$.
\end{myObs}
\begin{myObs}\label{theo:obs3}
For every variable $z_k\in\mathcal{O}$, at most one of the two tasks $\tau(z_k, T)$ and $\tau(z_k, F)$ is covered by $\mathcal{S}'$.
\end{myObs}

We say that a variable $z_k\in\mathcal{O}$ is bad if neither $\tau(z_k, T)$ nor $\tau(z_k, F)$ is covered by $\mathcal{S}'$; otherwise it is good. We next show that only a small number of the variables are bad.
\begin{myObs}\label{theo:obs4}
There are at most $\frac{\epsilon n}{100}$ bad variables.
\end{myObs}

Then, we construct the following assignment to the variables of $\mathcal{O}$. If variable $z_k\in\mathcal{O}$ is good, then there is a unique value $\gamma_k\in\{T,F\}$, such that task $\tau(z_k, \gamma_k)$ is covered by $\mathcal{S}'$. We then assign $z_k$ the value $\gamma_k$. If $z_k$ is bad, we assign it any value arbitrarily. We now claim that the above assignment satisfies more than $(1-\epsilon)m$ clauses. We say that a clause is bad if it contains a bad variable, and it is good otherwise. Since there are at most $\frac{\epsilon n}{100}$ bad variables, and each variable participates in 5 clauses, the number of bad clauses is at most $\frac{\epsilon n}{20}\leq\frac{3\epsilon m}{100}$. So there are more than $(1-\epsilon)m$ good clauses. Let $C_l$ be a good clause, and $\Gamma(C_l,\alpha)$ be the set corresponding to $C_l$ that belongs to $\mathcal{S}'$. Then $\alpha$ is an assignment to the variables of $C_l$ that satisfies $C_l$, and each variable participating in $C_l$ was assigned a value consistent with $\alpha$. So clause $C_l$ is satisfied. 

To conclude, we have assumed that $\varphi$ is a No-Instance, and showed that, if the MELON-SWM problem has a solution with non-negative social welfare, there is an assignment to the variables of $\varphi$ satisfying more than $(1-\epsilon)m$ of its clauses, which is impossible for a No-Instance. Therefore, if $\varphi$ is a No-Instance, every solution has social welfare at most $0$.
\end{proof}

Next, we describe Theorem \ref{theo:NPHard} that states the \textit{NP-hardness} and \textit{inapproximability} of the MELON-SWM problem.

\begin{myTheo}\label{theo:NPHard}
The MELON-SWM problem is NP-hard, and for any factor $\phi$, there is no efficient $\phi$-approximation algorithm to the MELON-SWM problem. 
\end{myTheo}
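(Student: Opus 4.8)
The plan is to combine the polynomial-time reduction of Theorem~\ref{theo:reduction} with the completeness/soundness gap established in Theorem~\ref{theo:welfarematching} and the NP-completeness of distinguishing Yes- from No-instances of 3SAT(5) guaranteed by Lemma~\ref{theo:SATHard}. First I would establish NP-hardness: suppose, for contradiction, that some polynomial-time algorithm computes an optimal solution to the MELON-SWM problem. Given a 3SAT(5) instance $\varphi$, use Theorem~\ref{theo:reduction} to build the corresponding MELON-SWM instance in time polynomial in $n$, run the assumed algorithm, and inspect the optimal social welfare. By Theorem~\ref{theo:welfarematching} this value is $X>0$ when $\varphi$ is a Yes-Instance and at most $0$ when $\varphi$ is a No-Instance, so the sign of the optimum decides which case holds. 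This would distinguish the Yes- and No-instances of 3SAT(5) in polynomial time, contradicting Lemma~\ref{theo:SATHard} unless $\mathrm{P}=\mathrm{NP}$; hence MELON-SWM is NP-hard.

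For the inapproximability claim I would exploit the same gap, observing that it separates a strictly positive value from a non-positive one, a gap that no finite multiplicative factor can bridge. Concretely, suppose for some factor $\phi$ there were an efficient $\phi$-approximation algorithm $\mathcal{A}$. The all-zero assignment ($x_i=0$ for all $w_i$, $y_j=0$ for all $\tau_j$) is always feasible with social welfare $0$, so the optimum is always nonnegative; when $\varphi$ is a No-Instance, Theorem~\ref{theo:welfarematching} forces the optimum to be exactly $0$, and every feasible solution — in particular the one $\mathcal{A}$ returns — has social welfare at most $0$. When $\varphi$ is a Yes-Instance, the optimum is at least $X>0$, so $\mathcal{A}$ must output a solution of social welfare at least $X/\phi>0$. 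Therefore, running $\mathcal{A}$ on the instance produced by Theorem~\ref{theo:reduction} and testing whether the returned solution has strictly positive social welfare decides the Yes/No question for $\varphi$ in polynomial time, again contradicting Lemma~\ref{theo:SATHard}, and the conclusion follows for arbitrary $\phi$.

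I expect the only delicate point to be the precise meaning of ``$\phi$-approximation'' in the presence of an objective that can equal $0$ (or, among non-dominated solutions, be negative): a purely multiplicative guarantee is vacuous around an optimum of $0$, which is exactly why the positive-versus-nonpositive separation of Theorem~\ref{theo:welfarematching} is what makes the argument go through for every factor. I would therefore phrase the reduction's consequence in terms of the sign of the achieved social welfare rather than a numerical ratio, so the argument is robust to how one defines approximation, and I would note explicitly that this is why the construction was engineered so that the Yes-case witness value $X$ is bounded away from $0$ while the No-case optimum collapses to $0$.
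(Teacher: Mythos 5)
Your proposal is correct and follows essentially the same route as the paper: both arguments run the reduction of Theorem~\ref{theo:reduction}, invoke the positive-versus-nonpositive gap of Theorem~\ref{theo:welfarematching}, and derive a contradiction with Lemma~\ref{theo:SATHard}, first for an exact solver and then for a hypothetical $\phi$-approximation. Your explicit observations that the all-zero solution makes the optimum nonnegative and that a $\phi$-approximation must return value at least $X/\phi>0$ on Yes-instances merely spell out steps the paper leaves implicit.
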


\begin{proof}
Based on Theorem \ref{theo:reduction}, there exists a reduction from any 3SAT(5) problem instance $\varphi$ to an instance $\mathcal{I}(\varphi)$ of the MELON-SWM problem. From Theorem \ref{theo:welfarematching}, we have that the optimal solution to $\mathcal{I}(\varphi)$ also gives a solution to $\varphi$. That is, if the optimal social welfare of $\mathcal{I}(\varphi)$ is positive, then $\varphi$ is a Yes-Instance; otherwise, $\varphi$ is a No-Instance. Together with Lemma \ref{theo:SATHard} stating the NP-completeness of the 3SAT(5) problem, we conclude that the MELON-SWM problem is NP-hard. 

In fact, Theorem \ref{theo:reduction} and \ref{theo:welfarematching} give an inapproximability result about the MELON-SWM, as well. Suppose there is an efficient factor-$\phi$ approximation algorithm $\mathcal{A}$ for the MELON-SWM problem. We can use it to distinguish Yes- and No-instances of the 3SAT(5) problem on $n\gg\phi$ variables. If $\varphi$ is a Yes-Instance, then the algorithm has to return a solution with positive social welfare for $\mathcal{I}(\varphi)$, and if $\varphi$ is a No-Instance, then any solution has social welfare at most 0. So algorithm $\mathcal{A}$ distinguishes the Yes- and the No-instances of 3SAT(5), contradicting Lemma \ref{theo:SATHard}.
\end{proof}

\subsubsection{Proposed Mechanism}
~

Theorem \ref{theo:NPHard} not only shows the NP-hardness of the MELON-SWM problem, but also indicates that there is no efficient algorithm with a guaranteed approximation ratio for it. Therefore, we relax the requirement of provable approximation ratio, and propose the following MELON double auction that aims to ensure \textit{non-negative social welfare}, instead. Its winner selection algorithm is given in the following Algorithm \ref{al:wd}.

\begin{algorithm}\label{al:wd}
\SetAlgoNlRelativeSize{0}
\SetNlSkip{0.3em}
\small
\KwIn{$\mathcal{T}$, $\mathcal{R}$, $\mathcal{W}$, $\boldsymbol{\Gamma}$, $\mathbf{a}$, $\mathbf{b}$, $\mathbf{q}$, $\mathbf{Q}$\;}
\KwOut{$\mathcal{S_R}$, $\mathcal{S_W}$, $\mathcal{C}$\;}
\tcp{Initialization}
$\mathcal{S_R}\leftarrow\emptyset$, $\mathcal{S_W}\leftarrow \emptyset$\;\label{line:iniwd}
\tcp{Find a feasible cover}
$\mathcal{C}\leftarrow\texttt{FC}(\mathcal{T}, \boldsymbol{\Gamma}, \mathbf{q}, \mathbf{Q})$\;\label{line:feasiblecover}
\ForEach{$j$ \text{s.t.} $\tau_j\in\mathcal{T}$}{\label{line:Cjforstart}
	$\mathcal{C}_j\leftarrow\{w_i|w_i\in\mathcal{C}, \tau_j\in\Gamma_i\}$\;\label{line:Cjforend}
}
\tcp{Main loop}
\While{$\max_{j:r_j\in\mathcal{R}}\big(a_j-\sum_{i:w_i\in\mathcal{C}_j}b_i\big)\geq 0$}{\label{line:mainloopwinnerstart}
	$j^*\leftarrow\arg\max_{j:r_j\in\mathcal{R}}\big(a_j-\sum_{i:w_i\in\mathcal{C}_j}b_i\big)$\;\label{line:maxindex}
	$\mathcal{S_R}\leftarrow\mathcal{S_R}\cup\{r_{j^*}\}$\;\label{line:incrstar}
	$\mathcal{R}\leftarrow\mathcal{R}\setminus\{r_{j^*}\}$\;\label{line:excrstar}
	$\mathcal{S_W}\leftarrow\mathcal{S_W}\cup\mathcal{C}_{j^*}$\;\label{line:incCjstar}
	\ForEach{$j$ \text{s.t.} $r_i\in\mathcal{R}$}{\label{line:innerforwinner}
		$\mathcal{C}_j\leftarrow\mathcal{C}_j\setminus\mathcal{C}_{j^*}$\;\label{line:mainloopwinnerend}
	}
}
\Return$\mathcal{S_R}, \mathcal{S_W}$\;\label{line:returnwinner}
\caption{MELON Double Auction Winner Selection}
\end{algorithm}

Algorithm \ref{al:wd} takes as inputs the task set $\mathcal{T}$, the requester set $\mathcal{R}$, the worker set $\mathcal{W}$, the profile of workers' interested task sets $\boldsymbol{\Gamma}$, the requesters' and workers' bid profile $\mathbf{a}$ and $\mathbf{b}$, the $\mathbf{q}$ matrix, as well as the $\mathbf{Q}$ vector. Firstly, it initializes the winning requester and worker set as $\emptyset$ (line \ref{line:iniwd}). Then, it calculates a \textit{feasible cover}, denoted by $\mathcal{C}$, containing the set of workers that make Constraint (\ref{eq:constraintsimple}) feasible for each task $\tau_j$ given that each $y_j=1$, by calling another algorithm $\texttt{FC}$ which takes the task set $\mathcal{T}$, the profile of workers' interested task sets $\boldsymbol{\Gamma}$, the $\mathbf{q}$ matrix, and the $\mathbf{Q}$ vector as inputs (line \ref{line:feasiblecover}). Algorithm $\texttt{FC}$ can be easily implemented in time polynomial in $M$ and $N$. For example, $\texttt{FC}$ could greedily select each worker $w_i$ into the feasible cover in a decreasing order of the value $\sum_{j:\tau_j\in\Gamma_i}q_{i,j}$ until all constraints are satisfied. The computational complexity of such $\texttt{FC}$ is $O(N)$. We assume that $\texttt{FC}$ adopts such a greedy approach in the rest of this paper. Note that the specific choice of $\texttt{FC}$ is not important, as long as it returns a feasible cover in polynomial time. Next, for each task $\tau_j$, Algorithm \ref{al:wd} chooses from the feasible cover the set of workers $\mathcal{C}_j$ whose interested task sets contain this task (line \ref{line:Cjforstart}-\ref{line:Cjforend}). 

Based on $\mathcal{C}$, the main loop (line \ref{line:mainloopwinnerstart}-\ref{line:mainloopwinnerend}) of the algorithm selects the set of winning requesters and workers that give non-negative social welfare. It executes until $\max_{j:r_j\in\mathcal{R}}\big(a_j-\sum_{i:w_i\in\mathcal{C}_j}b_i\big)$, the \textit{maximum marginal social welfare} of including a new requester $r_j$ and the set of workers $\mathcal{C}_j$ into, respectively, the winning requester and worker set, becomes negative (line \ref{line:mainloopwinnerstart}). In each iteration of the main loop, the Algorithm finds first the index $j^*$ of the requester $r_{j^*}$ that provides the \textit{maximum marginal social welfare} (line \ref{line:maxindex}). Next, it includes $r_{j^*}$ into the winning requester set $\mathcal{S_R}$ (line \ref{line:incrstar}), removes $r_{j^*}$ from the requester set $\mathcal{R}$ (line \ref{line:excrstar}), and includes all workers in $\mathcal{C}_{j^*}$ into the winning worker set $\mathcal{S_W}$ (line \ref{line:incCjstar}). The last step of the main loop is to remove all workers in $\mathcal{C}_{j^*}$ from $\mathcal{C}_j$ for each task $\tau_j$ (line \ref{line:innerforwinner}). Finally, Algorithm \ref{al:wd} returns the winning requester and worker set $\mathcal{S_R}$ and $\mathcal{S_W}$ (line \ref{line:returnwinner}). 

Next, we present the pricing algorithm of the MELON double auction in Algorithm \ref{al:pricing}. 

\begin{algorithm}\label{al:pricing}
\SetAlgoNlRelativeSize{0}
\SetNlSkip{0.3em}
\small
\KwIn{$\mathcal{T}$, $\mathcal{R}$, $\mathcal{W}$, $\boldsymbol{\Gamma}$, $\mathbf{a}$, $\mathbf{b}$, $\mathbf{q}$, $\mathbf{Q}$, $\mathcal{S_R}$, $\mathcal{S_W}$\;}
\KwOut{$\mathbf{p}^r$, $\mathbf{p}^w$\;}
\tcp{Initialization}
$\mathbf{p}^r\leftarrow\mathbf{0}$, $\mathbf{p}^w\leftarrow\mathbf{0}$\;\label{line:inipricing}
\tcp{Pricing for winning requesters}
\ForEach{$j$ \text{s.t.} $r_j\in\mathcal{S_R}$}{\label{line:looprpstart}
	run Algorithm \ref{al:wd} on $\mathcal{R}\setminus \{r_j\}$ and $\mathcal{W}$\;\label{line:wdinrp}
	$\mathcal{S'_R}\leftarrow$winning requester set when line \ref{line:wdinrp} stops\;\label{line:imsets}
	\ForEach{$k$ \text{s.t.} $r_k\in\mathcal{S'_R}$}{\label{line:innerforrpstart}
		$p_j^r\leftarrow\min\big\{p_j^r, \sum_{w_i\in\mathcal{C}'_j}b_i+a_k-\sum_{w_i\in\mathcal{C}'_k}b_i\big\}$\;\label{line:innerforrpend}
	}
	\If{$\mathcal{C}'_j=\emptyset$}{\label{line:ifempty}
		$p_j^r\leftarrow\min\{p_j^r, 0\}$\;\label{line:looprpend}
	}
}
\tcp{Pricing for winning workers}
\ForEach{$i$ \text{s.t.} $w_i\in\mathcal{S_W}$}{\label{line:loopwpstart}
	run Algorithm \ref{al:wd} on $\mathcal{R}$ and $\mathcal{W}\setminus \{w_i\}$\;\label{line:wdinwp}
	$\mathcal{S'_R}\leftarrow$winning requester set when line \ref{line:wdinwp} stops\;\label{line:imsetswp}
	\ForEach{$k$ \text{s.t.} $w_i\in\mathcal{C}'_k$ and $r_k\in\mathcal{S'_R}$}{\label{line:forwpstart}
		sort requesters according to the decreasing order of $a_j-\sum_{i:w_i\in\mathcal{C}'_j}b_i$\;\label{line:sort}
		$f\leftarrow$index of the first requester with $w_i\not\in\mathcal{C}'_f$\;\label{line:firstindex}
		\eIf{$r_f\in\mathcal{S'_R}$}{\label{line:ifinSprimeR}
			$p_i^w\leftarrow\max\big\{p_i^w, a_k-\sum_{w_h\in\mathcal{C}'_k}b_h-\big(a_f-\sum_{w_h\in\mathcal{C}'_f}b_h\big)\big\}$\;
		}{
			$p_i^w\leftarrow\max\big\{p_i^w, a_k-\sum_{w_h\in\mathcal{C}'_k}b_h\big\}$\;\label{line:loopwpend}
		}
	}
}
\Return$\mathbf{p}^r$, $\mathbf{p}^w$\;\label{line:returnpricing}
\caption{MELON Double Auction Pricing}
\end{algorithm}

Apart from the same inputs to Algorithm \ref{al:wd}, Algorithm \ref{al:pricing} also takes as inputs the winning requester and worker set $\mathcal{S_R}$ and $\mathcal{S_W}$, outputted by Algorithm \ref{al:wd}. Firstly, Algorithm \ref{al:pricing} initializes the requesters' and workers' payment profile as zero vectors (line \ref{line:inipricing}). Then, it calculates the payment $p_j^r$ charged from each winning requester (line \ref{line:looprpstart}-\ref{line:looprpend}). For each $r_j\in\mathcal{S_R}$, Algorithm \ref{al:wd} is executed on the worker set $\mathcal{W}$ and requester set $\mathcal{R}$ except requester $r_j$ (line \ref{line:wdinrp}). Next, it sets $\mathcal{S'_R}$ as the winning requester set when line \ref{line:wdinrp} stops (line \ref{line:imsets}). For each $r_k\in\mathcal{S'_R}$, Algorithm \ref{al:pricing} finds the minimum bid $a_{j,k}$ for requester $r_j$ to replace $r_k$ as the winner. To achieve this, $a_{j,k}$ should satisfy $a_{j,k}-\sum_{w_i\in\mathcal{C}'_j}b_i=a_k-\sum_{w_i\in\mathcal{C}'_k}b_i$, which is equivalent to $a_{j,k}=\sum_{w_i\in\mathcal{C}'_j}b_i+a_k-\sum_{w_i\in\mathcal{C}'_k}b_i$. Note that $\mathcal{C}'_1,\cdots,\mathcal{C}'_M$ denote the sets $\mathcal{C}_1,\cdots,\mathcal{C}_M$ when the specific requester $r_k$ is selected into $\mathcal{S'_R}$. If $\mathcal{C}'_j$ is not empty, the minimum value among these $a_{j,k}$'s is chosen as the payment $p_j^r$ (line \ref{line:innerforrpstart}-\ref{line:innerforrpend}); otherwise, it is further compared with $0$ (line \ref{line:ifempty}-\ref{line:looprpend}), since requester $r_j$ could win, in this case, as long as her bid is non-negative. 

Next, Algorithm \ref{al:pricing} derives the payment $p_i^w$ to each winning worker $w_i$ (line \ref{line:loopwpstart}-\ref{line:loopwpend}). Similar to line \ref{line:wdinrp}, Algorithm \ref{al:wd} is executed on the requester set $\mathcal{R}$ and worker set $\mathcal{W}$ except worker $w_i$ (line \ref{line:wdinwp}), and $\mathcal{S'_R}$ is set as the winning requester set when line \ref{line:wdinwp} stops (line \ref{line:imsetswp}). In the rest of the algorithm, we also use $\mathcal{C}'_1,\cdots,\mathcal{C}'_M$ to denote the sets $\mathcal{C}_1,\cdots,\mathcal{C}_M$ when the specific requester $r_k$ is selected into $\mathcal{S'_R}$. For each set $\mathcal{C}'_k$ such that $w_i$ belongs to $\mathcal{C}'_k$ and $r_k$ belongs to $\mathcal{S'_R}$, the algorithm calculates the maximum bid $b_{i,k}$ for worker $w_i$ to be selected as a winner at this point (line \ref{line:forwpstart}-\ref{line:loopwpend}). The calculation firstly sorts requesters in the decreasing order of their marginal social welfare, i.e., $a_j-\sum_{i:w_i\in\mathcal{C}'_j}b_i$ (line \ref{line:sort}), and finds the index $f$ of the first the requester in this order such that $w_i$ does not belong to $\mathcal{C}'_f$ (line \ref{line:firstindex}). If $r_f$ is a winning requester in $\mathcal{S'_R}$, then $b_{i,k}$ should satisfy $a_k-\big(\sum_{w_h\in\mathcal{C}'_k}b_h+b_{i,k}\big)=a_f-\sum_{w_h\in\mathcal{C}'_f}b_h$, which is equivalent to $b_{i,k}=a_k-\sum_{w_h\in\mathcal{C}'_k}b_h-\big(a_f-\sum_{w_h\in\mathcal{C}'_f}b_h\big)$; otherwise, $b_{i,k}$ should satisfy $a_k-\big(\sum_{w_h\in\mathcal{C}'_k}b_h+b_{i,k}\big)=0$, which is equivalent to $b_{i,k}=a_k-\sum_{w_h\in\mathcal{C}'_k}b_h$. Then, the maximum value among these $b_{i,k}$'s are chosen as the payment $p_i^w$ (line \ref{line:ifinSprimeR}-\ref{line:loopwpend}). Finally, Algorithm \ref{al:pricing} returns the requesters' and workers' payment profile $\mathbf{p}^r$ and $\mathbf{p}^w$ (line \ref{line:returnpricing}).

%\tcp{$w_i$ cannot be winner, if $w_i\not\in\mathcal{C}_k$. Hence, iterations like this are not considered.}

\subsubsection{Analysis of the Proposed Mechanism}
~

In this section, we prove several desirable properties of our MELON double auction, described in Algorithm \ref{al:wd} and \ref{al:pricing}. Firstly, we show its truthfulness in Theorem \ref{theo:truthfulness}. 
\begin{myTheo}\label{theo:truthfulness}
The proposed MELON double auction is truthful. 
\end{myTheo}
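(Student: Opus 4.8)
The plan is to prove truthfulness separately for the requesters and for the workers, in each case via the standard ``monotone allocation $+$ threshold payment'' characterization of dominant-strategy truthfulness for single-parameter agents: each requester $r_j$ is single-parameter in her value $v_j$, and each worker $w_i$ is single-parameter in her cost $c_i$. Concretely, I will show (i) the winner-selection rule of Algorithm~\ref{al:wd} is monotone — a winning requester stays winning if she raises her bid, and a winning worker stays winning if she lowers her bid, all other bids fixed — and (ii) the payments produced by Algorithm~\ref{al:pricing} coincide with the corresponding critical (threshold) bids, and in particular do not depend on the winner's own bid. Granting (i) and (ii), the usual exchange argument closes it: a requester $r_j$ who wins pays a bid-independent threshold $a_j^{*}$, so reporting $a_j=v_j$ makes her win exactly when $v_j\ge a_j^{*}$, i.e.\ exactly when the resulting utility $v_j-a_j^{*}$ is nonnegative, and no misreport can do better; the symmetric statement with $b_i$, $c_i$ and a threshold $b_i^{*}$ (report $b_i=c_i$, win exactly when $c_i\le b_i^{*}$) handles the workers. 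Individual rationality is then immediate and can be recorded along the way.

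For monotonicity on the requester side, the key structural observation is that in Algorithm~\ref{al:wd} the feasible cover $\mathcal{C}$, and hence every set $\mathcal{C}_j$, is computed by \texttt{FC} from $(\mathcal{T},\boldsymbol{\Gamma},\mathbf{q},\mathbf{Q})$ only, so it is independent of all bids; bids enter the algorithm solely through the greedy key $a_j-\sum_{i:w_i\in\mathcal{C}_j}b_i$ in the main loop, and the sets $\mathcal{C}_j$ shrink in a bid-independent fashion as winners are absorbed. Raising $a_j$ strictly increases $r_j$'s key at every iteration in which $r_j$ is still available while leaving every other requester's key unchanged, so $r_j$ is selected no later than before; a careful induction on the while-loop iterations makes this precise, and also identifies $a_j^{*}$ as the largest competing key that $r_j$ must overtake (or $0$, when $\mathcal{C}_j$ has been emptied, since then $r_j$ wins at any nonnegative bid). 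On the worker side, a winning worker $w_i$ joins $\mathcal{S_W}$ exactly at the iteration in which the first requester $r_j$ with $w_i\in\mathcal{C}_j$ is selected (selection of such an $r_j$ is also the only event that removes $w_i$ from the remaining $\mathcal{C}_k$'s); lowering $b_i$ weakly raises the key of every such $r_j$ and of no one else, so some requester containing $w_i$ is still selected, and $w_i$ remains winning. This again pins down $b_i^{*}$ as, over the rounds in which $w_i$ is still uncovered among the remaining requesters, the gap between the key of the requester that pulls $w_i$ in and the key of the best competing requester not using $w_i$ (or that key itself, when there is no winning alternative).

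The remaining, and I expect most delicate, step is to verify that Algorithm~\ref{al:pricing} outputs exactly these thresholds. For a winning requester $r_j$, lines~\ref{line:innerforrpstart}--\ref{line:looprpend} re-run winner selection on $\mathcal{R}\setminus\{r_j\}$ and, for each requester $r_k$ winning in that run, record $a_{j,k}=\sum_{w_i\in\mathcal{C}'_j}b_i+a_k-\sum_{w_i\in\mathcal{C}'_k}b_i$, the bid at which $r_j$ would exactly tie $r_k$'s key at the moment $r_k$ is chosen, then take the minimum (comparing with $0$ when $\mathcal{C}'_j=\emptyset$); I will argue, using that the $\mathcal{C}_k$'s are unaffected by $r_j$'s bid and that the greedy order is determined by the keys alone, that this minimum equals the infimum of bids for which $r_j$ is selected in Algorithm~\ref{al:wd}. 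The worker case is analogous but needs the extra bookkeeping in lines~\ref{line:forwpstart}--\ref{line:loopwpend}: the first requester $r_f$ in the sorted order with $w_i\notin\mathcal{C}'_f$ is the ``next best'' allocation not relying on $w_i$, and the two branches ($r_f\in\mathcal{S'_R}$ or not) correspond to whether that alternative is actually taken; I will show the resulting $b_{i,k}$ is the largest cost at which $w_i$ is still pulled into $\mathcal{S_W}$ via $r_k$, and that the maximum over $k$ is exactly $b_i^{*}$. The crux throughout is that the feasible cover, being bid-independent, is untouched when an agent is removed for the re-run, so these re-runs genuinely compute thresholds; checking this invariant carefully across the nested loops of Algorithm~\ref{al:pricing} is the main obstacle.
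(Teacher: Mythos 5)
Your proposal is correct and takes essentially the same route as the paper: the paper's proof likewise establishes truthfulness by verifying that the bid-independent feasible cover makes the greedy winner selection monotone and that Algorithm~\ref{al:pricing} charges each winning requester the infimum, and pays each winning worker the supremum, of the bids that keep her winning, then invokes the standard monotonicity-plus-critical-payment characterization. Your write-up simply fills in the iteration-by-iteration details that the paper leaves implicit.
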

\begin{proof}
We prove the truthfulness of the MELON double auction by showing that it satisfies the properties of \textit{monotonicity} and \textit{critical payment}. 
\begin{itemize}[leftmargin=*]\compresslist
\item \textbf{Monotonicity.} The algorithm \texttt{FC} called by Algorithm \ref{al:wd} is independent of the requesters' and workers' bids, and winners are selected based on a decreasing order of the value $a_j-\sum_{i:w_i\in\mathcal{C}_j}b_i$. Thus, if a requester $r_j$ wins by bidding $a_j$, she will also win the auction by bidding any $a'_j>a_j$. Similarly, if a worker $w_i$ wins by bidding $b_i$, she will win the auction, as well, if her bid takes any value $b'_i<b_i$. 
\item \textbf{Critical payment.} Algorithm \ref{al:pricing} in fact pays every winning requester and worker the infimum and supremum of her bid, respectively, that can make her a winner. 
\end{itemize}

As proved in \cite{LBlumrosenAGT07}, these two properties make an auction truthful, i.e., each requester $r_j$ maximizes her utility by bidding $v_j$, and each worker $w_i$ maximizes her utility by bidding $c_i$. Therefore, the MELON double auction is truthful.
\end{proof}

Next, we show that the proposed MELON double auction satisfies individual rationality in Theorem \ref{theo:ir}. 
\begin{myTheo}\label{theo:ir}
The proposed MELON double auction is individual rational. 
\end{myTheo}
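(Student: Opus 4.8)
The plan is to reduce individual rationality to the critical‑payment structure that was already exploited in the proof of truthfulness (Theorem~\ref{theo:truthfulness}). First note that, by Definitions~\ref{def:requesterutility} and~\ref{def:workerutility}, every losing requester and every losing worker obtains utility exactly $0$, so individual rationality is immediate for them; the content of the claim concerns winners only. Since the MELON double auction is truthful, truthful bidding is a dominant strategy, so it suffices to verify the property at the bids $a_j=v_j$ for each $r_j\in\mathcal{S_R}$ and $b_i=c_i$ for each $w_i\in\mathcal{S_W}$, i.e., to show $p_j^r\le a_j$ for every winning requester and $p_i^w\ge b_i$ for every winning worker.

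For a winning requester $r_j$, I would argue that each value with which $p_j^r$ is updated in Algorithm~\ref{al:pricing} — namely $\sum_{w_i\in\mathcal{C}'_j}b_i+a_k-\sum_{w_i\in\mathcal{C}'_k}b_i$ for the relevant $r_k$, and possibly $0$ in the branch $\mathcal{C}'_j=\emptyset$ — is a \emph{threshold bid}: it is the smallest bid with which $r_j$ would have been selected at the iteration in which $r_k$ was about to be chosen during the run of Algorithm~\ref{al:wd} on $\mathcal{R}\setminus\{r_j\}$. Because Algorithm~\ref{al:wd} ranks requesters by the marginal value $a_j-\sum_{i:w_i\in\mathcal{C}_j}b_i$ and the cover $\mathcal{C}$ returned by \texttt{FC} does not depend on any bid, the fact that $r_j$ actually won when bidding $a_j$ means $a_j$ dominated each of these thresholds; hence $p_j^r$, being the minimum over them, satisfies $p_j^r\le a_j=v_j$, so $u_j^r\ge 0$. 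The case $\mathcal{C}'_j=\emptyset$ is handled separately: there $r_j$ wins with any non‑negative bid, so the relevant threshold is $0\le v_j$.

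For a winning worker $w_i$ the argument is symmetric, using suprema instead of infima. Each value assigned to $p_i^w$ — $a_k-\sum_{w_h\in\mathcal{C}'_k}b_h-\bigl(a_f-\sum_{w_h\in\mathcal{C}'_f}b_h\bigr)$ or $a_k-\sum_{w_h\in\mathcal{C}'_k}b_h$ — is the largest bid with which $w_i$ would still be drawn into $\mathcal{S_W}$ through the set $\mathcal{C}'_k$ of a winning requester $r_k$ in the run of Algorithm~\ref{al:wd} on $\mathcal{W}\setminus\{w_i\}$, with $f$ being the index of the first requester in the sorted order not served by $w_i$. Monotonicity of the selection rule in $b_i$ (shown in Theorem~\ref{theo:truthfulness}) then gives that, since $w_i$ won when bidding $b_i$, her bid lies at or below each such threshold, so $p_i^w$, the maximum over them, satisfies $p_i^w\ge b_i=c_i$, i.e., $u_i^w\ge 0$. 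Combining the three cases yields $u_j^r\ge 0$ and $u_i^w\ge 0$ for all requesters and workers.

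The main obstacle is the bookkeeping that underlies the second and third paragraphs: one must check carefully that the quantities computed by Algorithm~\ref{al:pricing} really are the critical bids for the corresponding winner‑selection runs — in particular that the snapshots $\mathcal{C}'_1,\dots,\mathcal{C}'_M$ taken at the moment $r_k$ is selected, together with the index $f$ of the first requester in decreasing marginal order not served by $w_i$, correctly encode the marginal‑value comparison that decides $r_j$'s (resp. $w_i$'s) entry. Once this correspondence between the algorithm's updates and the thresholds is pinned down, the inequalities $p_j^r\le v_j$ and $p_i^w\ge c_i$ follow directly from the monotonicity already established, completing the proof.
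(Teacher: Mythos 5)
Your argument is essentially the paper's own proof: losers get zero utility by definition, winners bid truthfully by Theorem~\ref{theo:truthfulness}, and since Algorithm~\ref{al:pricing} charges each winning requester the infimum bid needed to win and pays each winning worker the supremum bid needed to win, $p_j^r\le v_j$ and $p_i^w\ge c_i$ follow, giving non-negative utilities. Your version just spells out the threshold bookkeeping in more detail than the paper does.
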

\begin{proof}
By Definition \ref{def:requesterutility} and \ref{def:workerutility}, losers of the MELON double auction receive zero utilities. From Theorem \ref{theo:truthfulness}, every winning requester $r_j$ bids $v_j$, and every winning worker $w_i$ bids $c_i$ to the platform. Moreover, they are paid, respectively, the infimum and supremum of the bid for them to win the auction. Therefore, it is guaranteed that all requesters and workers receive non-negative utilities, and thus the
proposed MELON double auction is individual rational.
\end{proof}

In Theorem \ref{theo:complexity}, we prove that the proposed MELON double auction has a polynomial-time computational complexity. 

\begin{myTheo}\label{theo:complexity}
The computational complexity of the proposed MELON double auction is $O(M^3 N+M^2 N^2)$. 
\end{myTheo}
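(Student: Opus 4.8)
The plan is to bound the running time of the two subroutines that together constitute the MELON double auction — the winner selection routine (Algorithm \ref{al:wd}) and the pricing routine (Algorithm \ref{al:pricing}) — and observe that the cost of the latter dominates and equals the claimed bound.

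First I would analyze Algorithm \ref{al:wd}. The call to \texttt{FC} costs $O(N)$ under the greedy implementation described (and is polynomial in any case), and building the sets $\mathcal{C}_j$ for all $M$ tasks costs $O(MN)$ since $\sum_{j}|\mathcal{C}_j|\le MN$. Each iteration of the main loop deletes one requester from $\mathcal{R}$ (line removing $r_{j^*}$), so the loop runs at most $M$ times; within an iteration we evaluate the marginal social welfare $a_j-\sum_{i:w_i\in\mathcal{C}_j}b_i$ for every remaining requester ($O(MN)$ total, since the $\mathcal{C}_j$'s partition-cover at most $MN$ memberships), take the maximizer ($O(M)$), and perform the set subtractions $\mathcal{C}_j\leftarrow\mathcal{C}_j\setminus\mathcal{C}_{j^*}$ for all tasks ($O(MN)$ with boolean-array representations). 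Hence Algorithm \ref{al:wd} runs in $O(M^2N)$ time.

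Next I would treat the two phases of Algorithm \ref{al:pricing} separately. In the requester-pricing phase the outer loop iterates over the at most $M$ winning requesters; each iteration re-runs Algorithm \ref{al:wd} at cost $O(M^2N)$ and then, over the at most $M$ members of $\mathcal{S}'_{\mathcal{R}}$, does $O(N)$ work per member to evaluate the sums over $\mathcal{C}'_j$ and $\mathcal{C}'_k$, i.e. $O(MN)$, which is dominated by the re-run; this phase therefore costs $O(M\cdot M^2N)=O(M^3N)$. In the worker-pricing phase the outer loop iterates over the winning workers, of which there can be as many as $N$; each iteration re-runs Algorithm \ref{al:wd} at cost $O(M^2N)$ and then, for each of the at most $M$ sets $\mathcal{C}'_k$ containing $w_i$, sorts the requesters by marginal social welfare and scans for the index $f$, costing $O(MN+M\log M)=O(MN)$ per set and hence $O(M^2N)$ for the inner loop. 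So each outer iteration costs $O(M^2N)$ and the phase costs $O(N\cdot M^2N)=O(M^2N^2)$. Summing the two phases gives $O(M^3N+M^2N^2)$ for Algorithm \ref{al:pricing}, which subsumes the $O(M^2N)$ cost of Algorithm \ref{al:wd}, yielding the claimed bound.

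The delicate points, and the place I expect the real work to be, are the bookkeeping inside the pricing loops: correctly bounding the outer-loop lengths (at most $M$ winning requesters but up to $N$ winning workers — this asymmetry is exactly what separates the $M^3N$ term from the $M^2N^2$ term), and accounting for the reconstruction of the intermediate sets $\mathcal{C}'_j$ at the moment each $r_k$ enters $\mathcal{S}'_{\mathcal{R}}$. The latter can be obtained by recording snapshots (or, more economically, the running sums $\sum_{w_i\in\mathcal{C}'_j}b_i$, which only decrease) during the re-run of Algorithm \ref{al:wd}, at no extra asymptotic cost. Everything else reduces to a routine summation of per-iteration costs.
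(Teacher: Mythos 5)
Your proposal is correct and follows essentially the same route as the paper's proof: bound Algorithm \ref{al:wd} by $O(M^2N)$, then charge the requester-pricing phase $M$ re-runs ($O(M^3N)$) and the worker-pricing phase up to $N$ re-runs ($O(M^2N^2)$), with the inner bookkeeping dominated by the re-runs. Your accounting is in fact slightly more careful than the paper's (explicitly bounding the inner sorting/scanning loops and noting the $M$-vs-$N$ asymmetry that produces the two terms), but the decomposition and the resulting bound are identical.
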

\begin{proof}
As mentioned in Section \ref{sec:mechanism}, the algorithm \texttt{FC} (line \ref{line:feasiblecover}) in Algorithm \ref{al:wd} takes a greedy approach, and has a computational complexity of $O(N)$. Line \ref{line:Cjforstart}-\ref{line:Cjforend} of Algorithm \ref{al:wd} that find the sets $\mathcal{C}_1,\cdots,\mathcal{C}_M$ terminate at most after $MN$ steps. Next, the main loop (line \ref{line:mainloopwinnerstart}-\ref{line:mainloopwinnerend}) terminates after $M$ iterations in worst case. Within each iteration, finding the index of the requester that provides the maximum marginal social welfare (line \ref{line:maxindex}) takes $O(M)$ time, and updating the sets $\mathcal{C}_1,\cdots,\mathcal{C}_M$ takes $O(MN)$ time. Therefore, the computational complexity of the main loop is $O(MN)$, and thus, that of Algorithm \ref{al:wd} is $O(M^2 N)$ overall. 
After Algorithm \ref{al:wd}, our MELON double auction executes its pricing algorithm described by Algorithm \ref{al:pricing}, where the loop for requester pricing (line \ref{line:inipricing}-\ref{line:looprpend}) terminates in worst case after $M$ iterations. Clearly, the computational complexity of each iteration of the loop is dominated by the execution of Algorithm \ref{al:wd} in line \ref{line:wdinrp}. Therefore, the requester pricing (line \ref{line:inipricing}-\ref{line:looprpend}) in Algorithm \ref{al:pricing} takes $O(M^3 N)$ time. Following a similar method of analysis, we can conclude that the worker pricing in Algorithm \ref{al:pricing} takes $O(M^2 N^2)$ time. Hence, the computation complexity of Algorithm \ref{al:pricing}, as well as that of the overall MELON double auction is $O(M^3 N+M^2 N^2)$. 
\end{proof}

Finally, we show in Theorem \ref{theo:positivesocialwelfare} that our MELON double auction guarantees non-negative social welfare, as required. 
\begin{myTheo}\label{theo:positivesocialwelfare}
The MELON double auction guarantees non-negative social welfare. 
\end{myTheo}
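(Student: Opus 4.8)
The plan is to show that the social welfare achieved by Algorithm \ref{al:wd} is never negative, by tracking the quantity $\sum_{j:r_j\in\mathcal{S_R}}a_j-\sum_{i:w_i\in\mathcal{S_W}}b_i$ as the main loop runs. The key observation is that this quantity equals the running sum of the \emph{marginal} social welfare contributions, one per iteration of the loop, and the loop's guard (line \ref{line:mainloopwinnerstart}) ensures each such contribution is non-negative at the moment the corresponding requester is selected.

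\begin{proof}
Consider the execution of Algorithm \ref{al:wd}. Initially $\mathcal{S_R}=\mathcal{S_W}=\emptyset$, so the social welfare based on bids, namely $\sum_{j:r_j\in\mathcal{S_R}}a_j-\sum_{i:w_i\in\mathcal{S_W}}b_i$, equals $0$. We claim this quantity is non-decreasing across iterations of the main loop. In the iteration that selects requester $r_{j^*}$, the algorithm adds $r_{j^*}$ to $\mathcal{S_R}$ and adds the worker set $\mathcal{C}_{j^*}$ to $\mathcal{S_W}$ (lines \ref{line:incrstar} and \ref{line:incCjstar}). Crucially, by the maintenance of the sets $\mathcal{C}_j$ in lines \ref{line:innerforwinner}-\ref{line:mainloopwinnerend}, at the start of each iteration $\mathcal{C}_{j^*}$ contains exactly those workers in the feasible cover $\mathcal{C}$ interested in $\tau_{j^*}$ that have \emph{not yet} been added to $\mathcal{S_W}$ in a previous iteration. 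Hence the increase in $\sum_{i:w_i\in\mathcal{S_W}}b_i$ during this iteration is exactly $\sum_{i:w_i\in\mathcal{C}_{j^*}}b_i$, and the increase in the bid-based social welfare is precisely $a_{j^*}-\sum_{i:w_i\in\mathcal{C}_{j^*}}b_i$. By the loop guard (line \ref{line:mainloopwinnerstart}) and the choice of $j^*$ as the argmax (line \ref{line:maxindex}), this value is $\max_{j:r_j\in\mathcal{R}}\big(a_j-\sum_{i:w_i\in\mathcal{C}_j}b_i\big)\geq 0$. Therefore every iteration leaves the bid-based social welfare non-negative, and since the loop terminates with this quantity still non-negative, Algorithm \ref{al:wd} outputs $\mathcal{S_R},\mathcal{S_W}$ with $\sum_{j:r_j\in\mathcal{S_R}}a_j-\sum_{i:w_i\in\mathcal{S_W}}b_i\geq 0$.

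It remains to pass from the bid-based social welfare to the true social welfare $u_{\text{social}}=\sum_{j:r_j\in\mathcal{S_R}}v_j-\sum_{i:w_i\in\mathcal{S_W}}c_i$ of Definition \ref{def:socialwelfare}. By Theorem \ref{theo:truthfulness}, the MELON double auction is truthful, so every requester $r_j$ bids $a_j=v_j$ and every worker $w_i$ bids $b_i=c_i$ in equilibrium. Substituting these into the inequality just established yields $u_{\text{social}}=\sum_{j:r_j\in\mathcal{S_R}}v_j-\sum_{i:w_i\in\mathcal{S_W}}c_i=\sum_{j:r_j\in\mathcal{S_R}}a_j-\sum_{i:w_i\in\mathcal{S_W}}b_i\geq 0$, which is the claimed non-negativity of the social welfare.
\end{proof}

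The only subtle point — and the place I would be most careful — is the bookkeeping in the first paragraph: one must be sure that a worker already placed in $\mathcal{S_W}$ is not ``double-charged'' when counting the marginal contribution of a later requester, which is exactly what the set updates in lines \ref{line:innerforwinner}-\ref{line:mainloopwinnerend} guarantee, since they remove $\mathcal{C}_{j^*}$ from every remaining $\mathcal{C}_j$. Everything else is immediate from the loop guard and from invoking truthfulness (Theorem \ref{theo:truthfulness}) to identify bids with true values and costs.
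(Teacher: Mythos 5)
Your proof is correct and follows essentially the same route as the paper's: the loop guard in line \ref{line:mainloopwinnerstart} ensures each iteration's marginal contribution $a_{j^*}-\sum_{i:w_i\in\mathcal{C}_{j^*}}b_i$ is non-negative, and the total social welfare is the sum of these contributions. You are merely more explicit than the paper about the set-update bookkeeping that prevents double-counting workers and about invoking truthfulness (Theorem \ref{theo:truthfulness}) to pass from bid-based to value-based social welfare, both of which the paper leaves implicit.
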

\begin{proof}
Clearly, in the winner selection algorithm described by Algorithm \ref{al:wd}, a requester $r_j$ and the workers in $\mathcal{C}_j$ could be selected as winners, only if the corresponding marginal social welfare $a_j-\sum_{i:w_i\in\mathcal{C}_j}b_i$ is non-negative (line \ref{line:mainloopwinnerstart}). Thus, as the overall social welfare given by Algorithm \ref{al:wd} is the sum of the aforementioned marginal social welfare of every iteration where new winners are selected, the MELON double auction guarantees non-negative social welfare. 
\end{proof}
\section{Performance Evaluation}\label{sec:perleval}
In this section, we introduce the baseline methods, simulations settings, as well as simulation results of the performance evaluation about our proposed CENTURION framework. 
\subsection{Baseline Methods}
In our evaluation of the incentive mechanism, the first baseline auction is the \underline{M}arginal \underline{S}ocial \underline{W}elfare greedy (MSW-Greedy) double auction. As in Algorithm \ref{al:wd}, it also initializes the winner sets as $\emptyset$,  executes the algorithm $\texttt{FC}$ to obtain a feasible cover $\mathcal{C}$, and chooses from $\mathcal{C}$ the set $\mathcal{C}_j$ containing each worker $w_i$ such that $\tau_j\in\Gamma_i$ for each task $\tau_j$. Different from the MELON double auction, it sorts requesters in a decreasing order of their marginal social welfare, i.e., the value $a_j-\sum_{i:w_i\in\mathcal{C}_j}b_i$ for each requester $r_j$. Then, it selects the requester $r_j$ and the set of workers in $\mathcal{C}_j$ as winners until the marginal social welfare becomes negative. Its pricing algorithm is the same as that of the MELON double auction. Clearly, the MSW-Greedy double auction is truthful and individual rational. Another baseline auction is the one that initi\underline{A}lizes the feas\underline{I}ble cover $\mathcal{C}$ as the entire wo\underline{R}ker set $\mathcal{W}$, which we call AIR double auction. The rest of its winner selection, as well as the entire pricing algorithm is the same as those of our MELON double auction. It is easily provable that the AIR double auction is also truthful and individual rational.

Furthermore, we compare our weighted data aggregation mechanism with a mean aggregation mechanism, which outputs $+1$ as the aggregated result for a task if the mean of workers' labels about this task is non-negative, and outputs $-1$, otherwise. Another baseline aggregation mechanism that we consider is the median aggregation that takes the median of workers' labels about a task as its aggregated result. 

\subsection{Simulation Settings}

\begin{table}[h]\scriptsize\setlength{\tabcolsep}{3.6pt}
\centering
\begin{tabular}{|c|c|c|c|c|c|c|c|}
\hline
Setting&$v_j$&$c_i$&$\theta_{i,j}$&$\beta_j$&$|\Gamma_i^*|$&$N$&$M$\\
\hline\hline
I&$[10,20]$&$[5,15]$&$[0,1]$&$[0.05,0.1]$&$[15,20]$&$[90,150]$&$60$\\
\hline
II&$[10,20]$&$[5,15]$&$[0,1]$&$[0.05,0.1]$&$[15,20]$&$60$&$[20, 80]$\\
\hline
\end{tabular}
~\\~\\
\caption{Simulation settings}\label{table:setting}
\end{table}

The parameter settings in our simulation are given in Table \ref{table:setting}. Specifically, parameters $v_j$, $c_i$, $\theta_{i,j}$, $\beta_j$, and $|\Gamma_i^*|$ are sampled uniformly at random from the intervals given in Table \ref{table:setting}. The worker $w_i$'s true interested task set $\Gamma_i^*$ contains $|\Gamma_i^*|$ tasks that are randomly selected from the task set $\mathcal{T}$. In setting I, we fix the number of requesters as $60$ and vary the number of workers from $90$ to $150$, whereas we fix the number of workers as $60$ and vary the number of requesters from $20$ to $80$ in setting II. 

\subsection{Simulation Results}

\begin{figure}[h]
\begin{minipage}{.49\linewidth}
\centering
\includegraphics[width=1\textwidth]{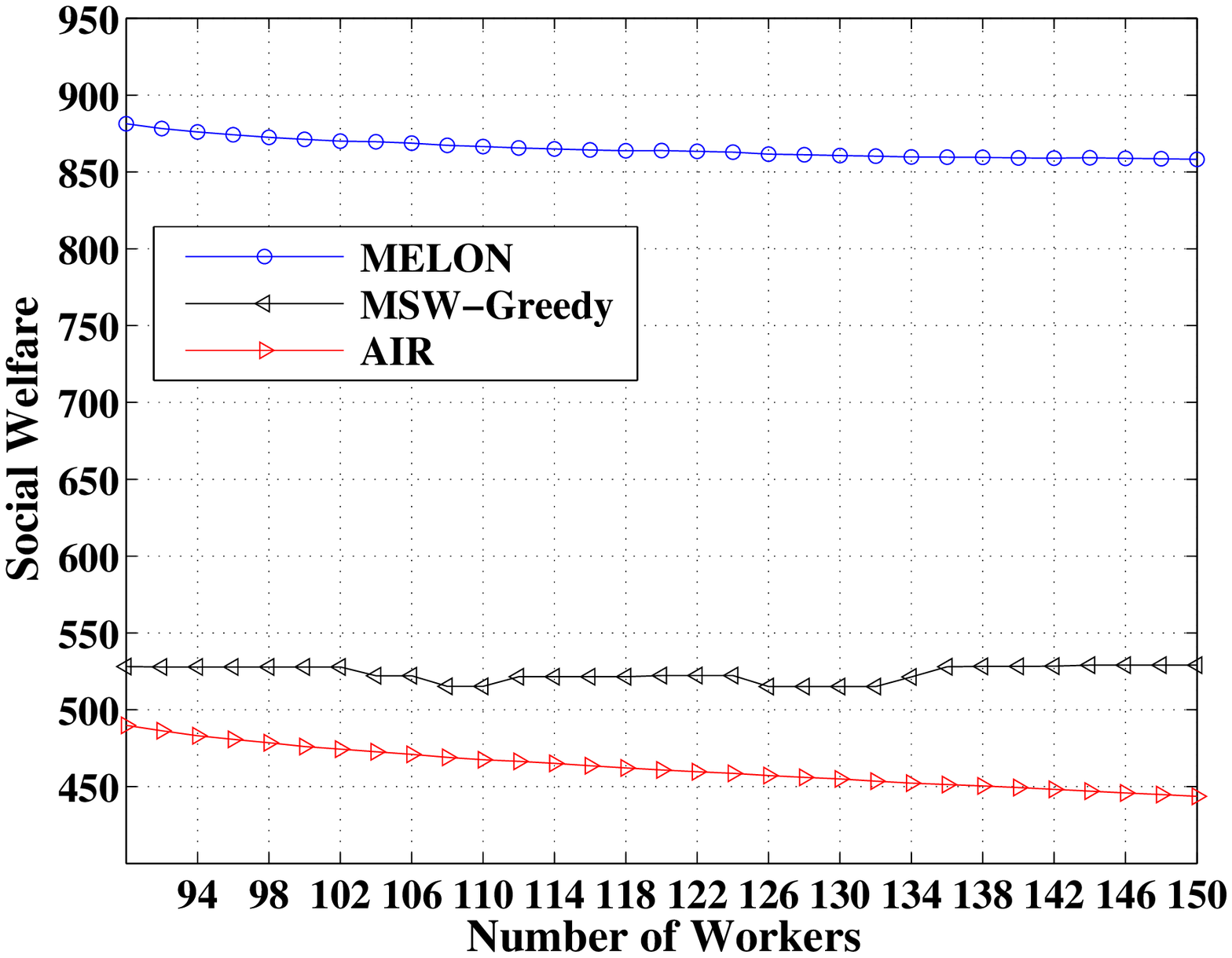}\\
\vspace{-0.2cm}
\caption{Social welfare (setting I)}
\label{fig:socialWelfareFixTask}
\end{minipage}
\begin{minipage}{.49\linewidth}
\centering
\includegraphics[width=1\textwidth]{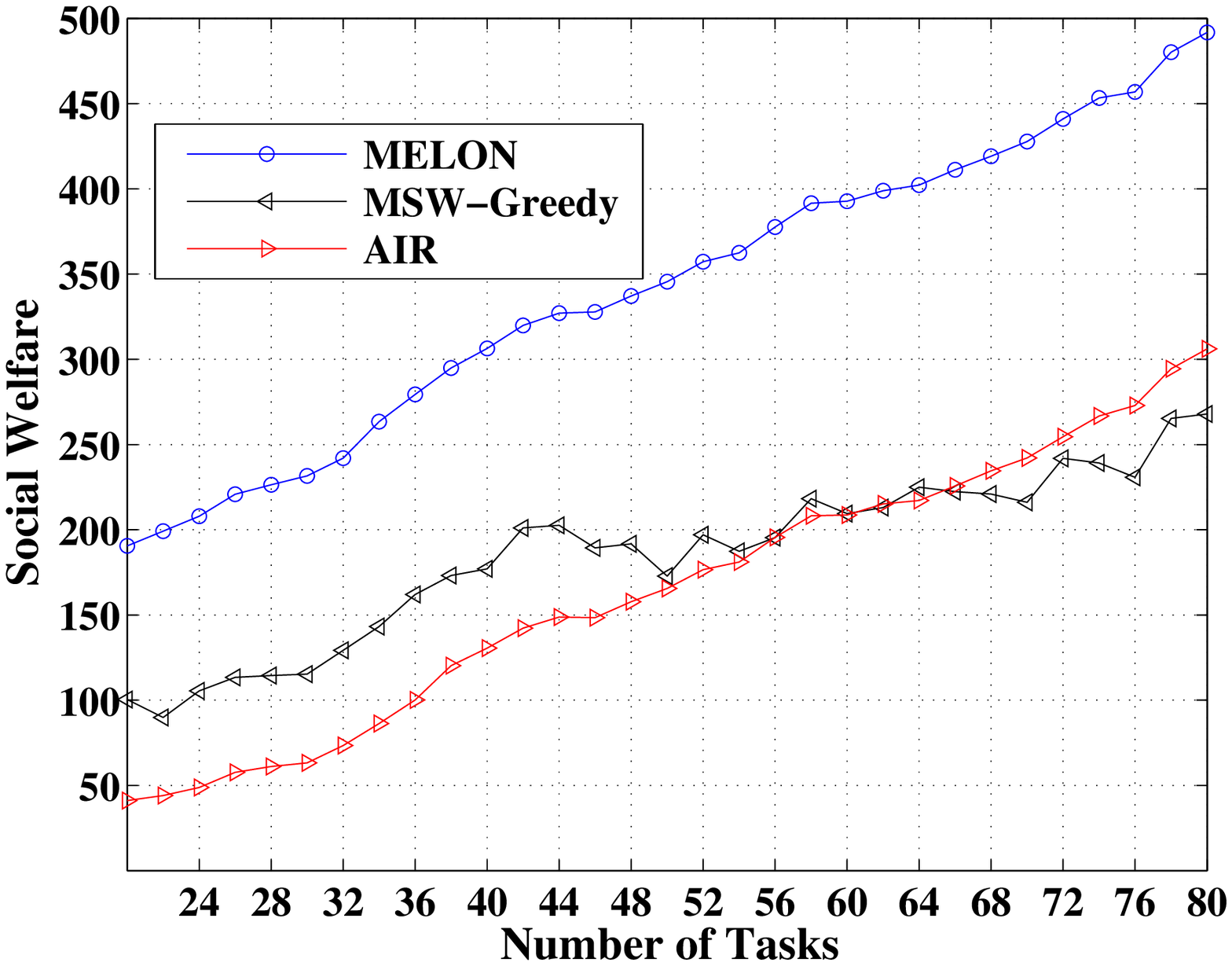}\\
\vspace{-0.2cm}
\caption{Social welfare (setting II)}
\label{fig:socialWelfareFixUser}
\end{minipage}
\end{figure}
\begin{figure}[h]
\begin{minipage}{.49\linewidth}
\centering
\includegraphics[width=1\textwidth]{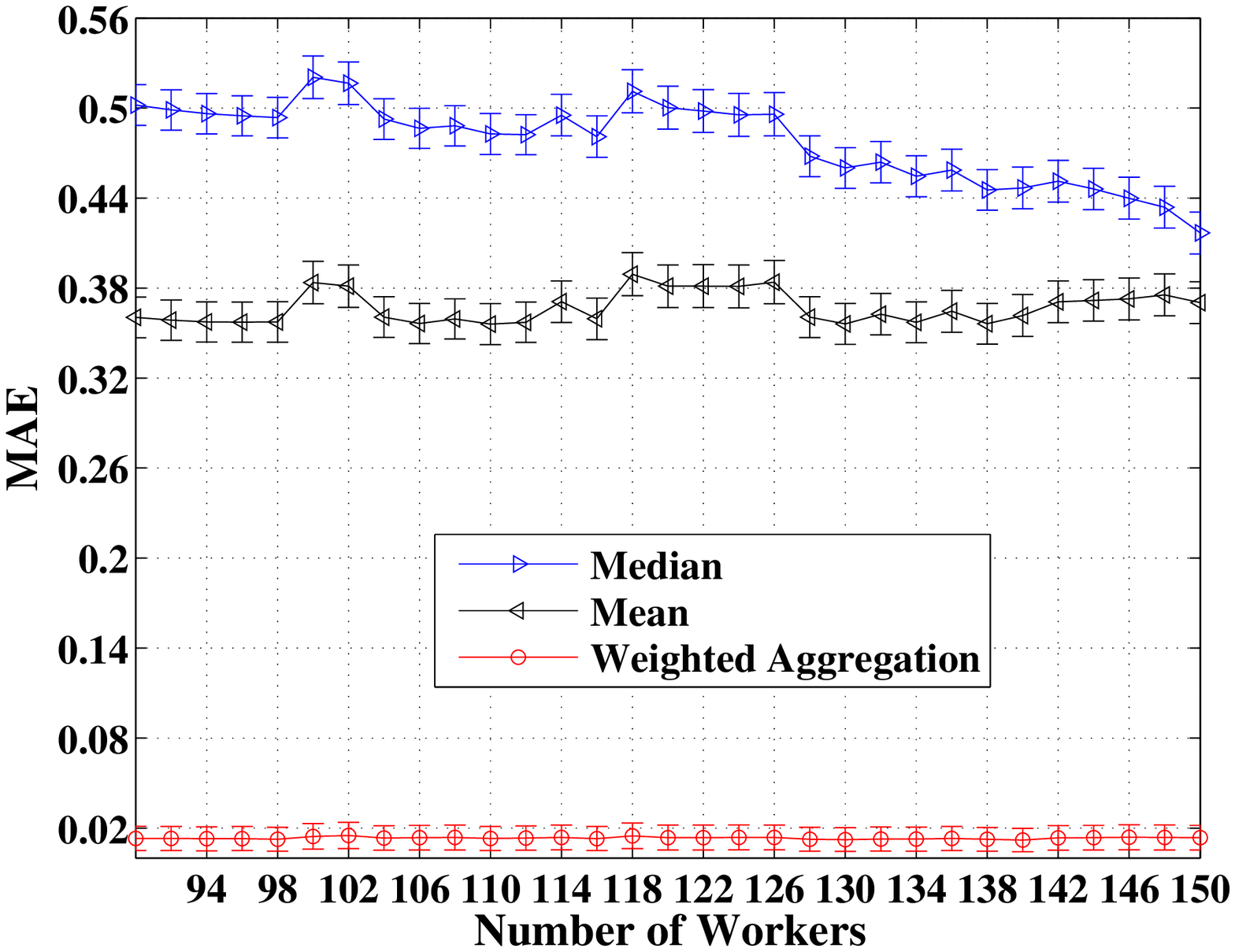}\\
\vspace{-0.2cm}
\caption{MAE (setting I)}
\label{fig:MAEFixTask}
\end{minipage}
\begin{minipage}{.49\linewidth}
\centering
\includegraphics[width=1\textwidth]{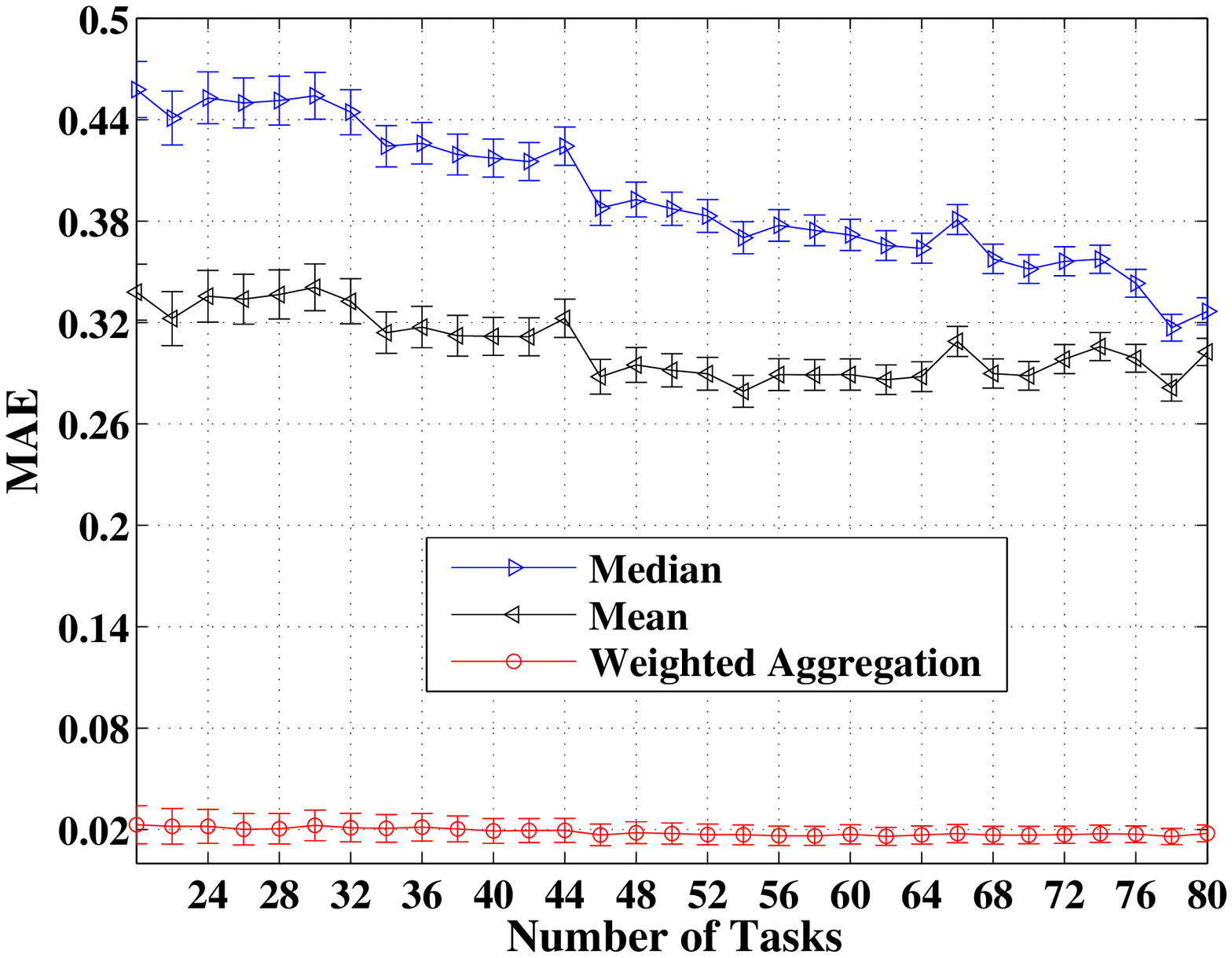}\\
\vspace{-0.2cm}
\caption{MAE (setting II)}
\label{fig:MAEFixUser}
\end{minipage}
\end{figure}
\vspace{-0.2cm}
\begin{figure}[h]
\begin{minipage}{.49\linewidth}
\centering
\includegraphics[width=1\textwidth]{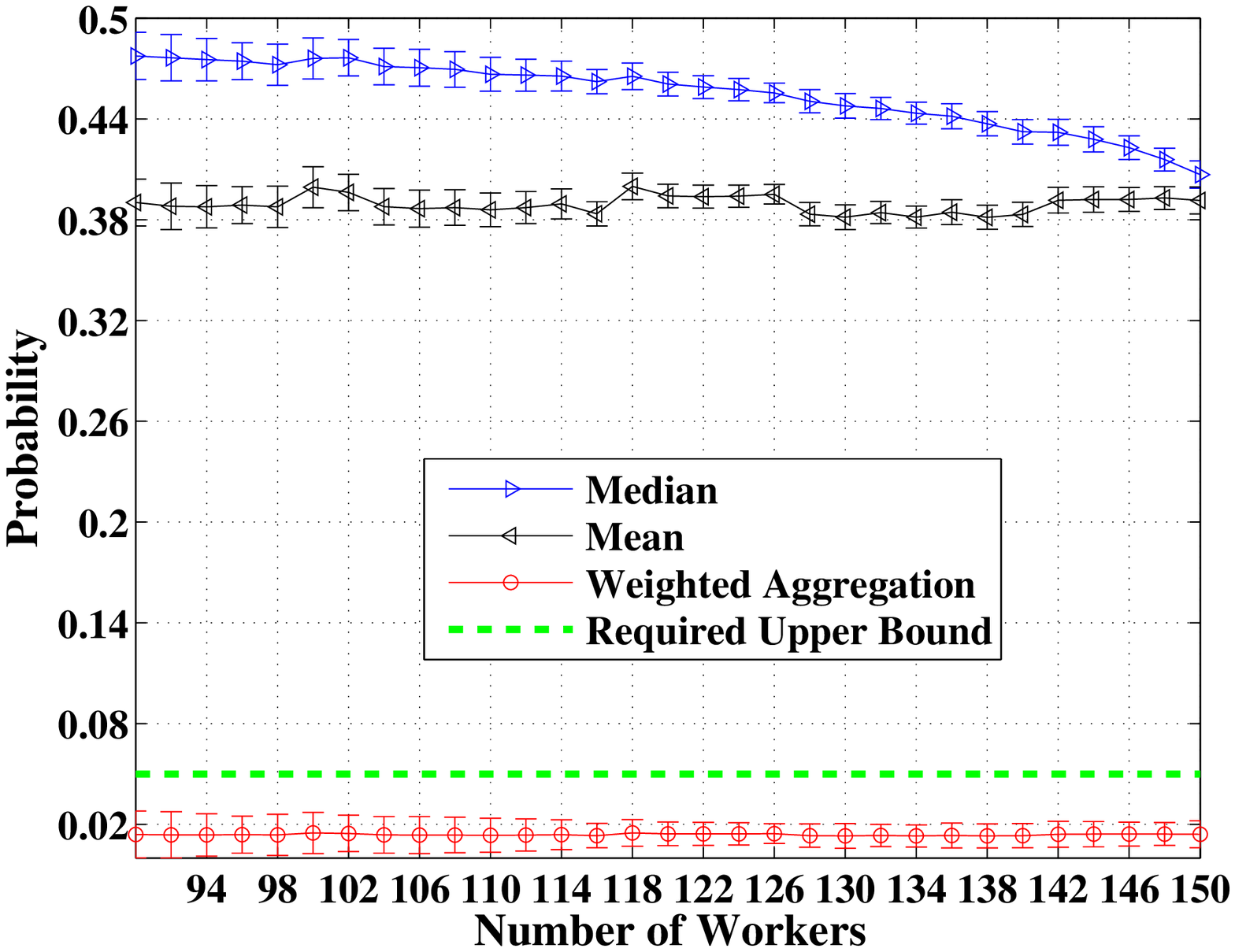}\\
\vspace{-0.2cm}
\caption{Error probability (setting I)}
\label{fig:errProbFixTask}
\end{minipage}
\begin{minipage}{.49\linewidth}
\centering
\includegraphics[width=1\textwidth]{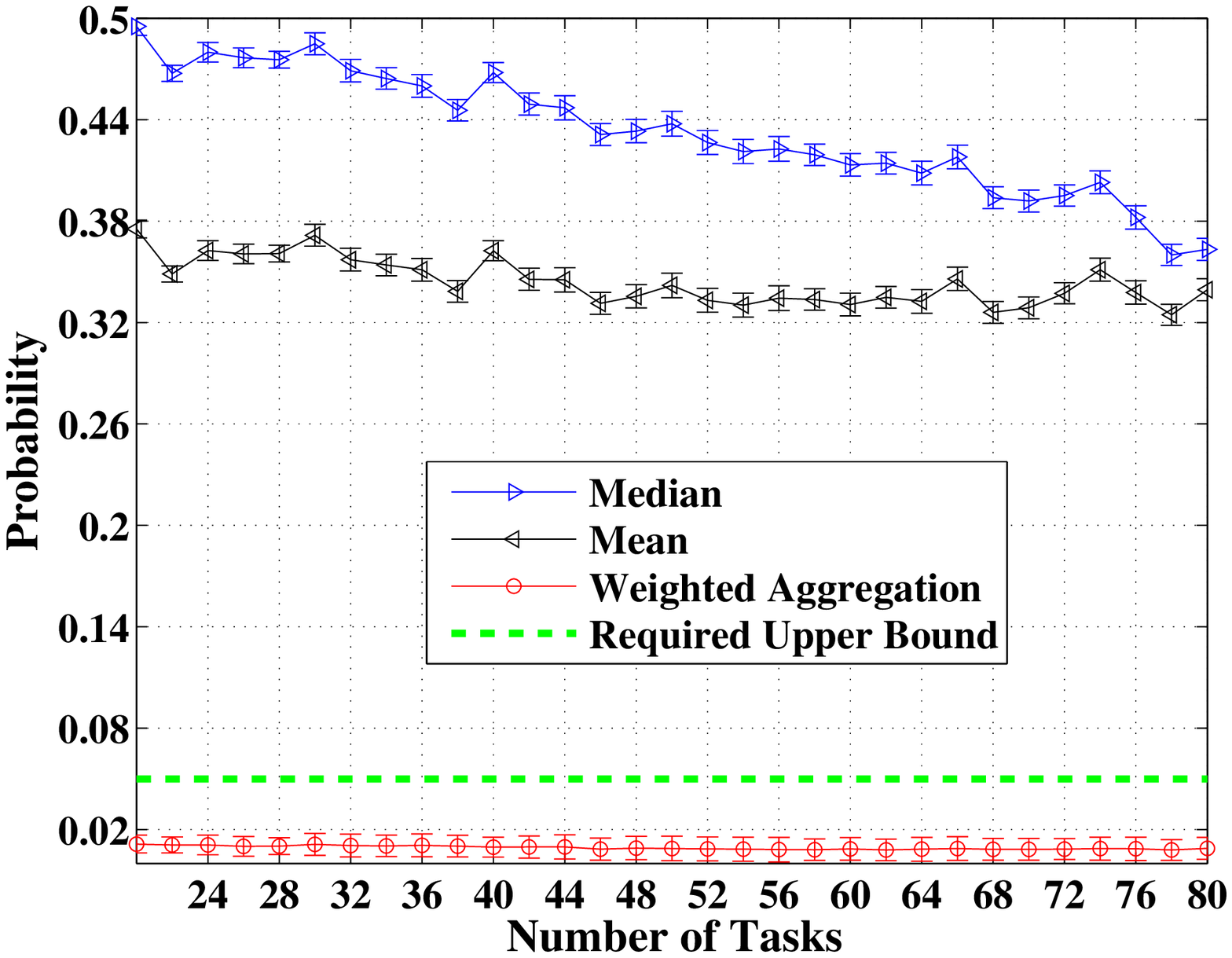}\\
\vspace{-0.2cm}
\caption{Error probability (setting II)}
\label{fig:errProbFixUser}
\end{minipage}
\end{figure}

In Figure \ref{fig:socialWelfareFixTask} and \ref{fig:socialWelfareFixUser}, we compare the social welfare generated by our MELON double auction with those of the two baseline auctions. These two figures show that our MELON double auction generates social welfare far more than the MSW-Greedy and AIR double auction under both setting I and II. 

We evaluate CENTURION's accuracy guarantee in setting I and II with a minor change of the parameter $\beta_j$, i.e., $\beta_j$ for each task $\tau_j$ is fixed as $0.05$ to simplify presentation. We compare the mean absolute error (MAE) for all tasks, which is defined as $\text{MAE}=\frac{1}{M}\sum_{j:\tau_j\in\mathcal{T}}|\hat{l}_j-l_j|$, of our weighted aggregation mechanism proposed in Algorithm \ref{al:aggregation} with those of the mean d median aggregation. The simulation for each combination of worker and requester number is repeated for $50000$ times, and we plot the means and standard deviations of the MAEs in Figure \ref{fig:MAEFixTask} and \ref{fig:MAEFixUser}. From these two figures, we observe that the MAE of our weighted aggregation mechanism is far less than those of the mean and median aggregation. Then, we show our simulation results about $\text{Pr}[|\hat{l}_j-l_j|]$, referred to as task $\tau_j$'s  error probability (EP). After $50000$ repetitions of the simulation for any given combination of worker and requester number, empirical values of the EPs are calculated, and the means and standard deviations of the empirical EPs are plotted in Figure \ref{fig:errProbFixTask} and \ref{fig:errProbFixUser}. These two figures show that the empirical EPs are less than the required upper bound $\beta_j$ and far less than those of the mean and median aggregation.

\section{Conclusion}\label{sec:conc}
In this paper, we propose CENTURION, a novel integrated framework for multi-requester MCS systems, consisting of a double auction-based incentive mechanism that stimulates the participation of both requesters and workers, and a data aggregation mechanism that aggregates workers data. Its incentive mechanism bears many desirable properties including truthfulness, individual rationality, computational efficiency, as well as non-negative social welfare, and its data aggregation mechanism generates highly accurate aggregated results. 

\begin{spacing}{1}
\bibliographystyle{IEEEtran}
\small
\bibliography{reference}
\end{spacing}

\begin{appendices}
\section{Proof of Observation \ref{theo:obs1}}

\begin{proof}
Let $\mathcal{R}$ be the set of all clauses $C_k$, such that $\mathcal{S}'$ contains any set of the form $\Gamma(C_k, \alpha)$, and assume that $|\mathcal{R}|=t$. Then the number of tasks of $E_3$ covered by $\mathcal{S}'$ is exactly $t$, and they contribute value $tZ$ to the social welfare. All remaining tasks may contribute at most $\frac{Z}{2}$ value to the social welfare. Since the cost of every set in $\mathcal{S}'$ is at least $Z$, in order for the final social welfare to be non-negative, $|\mathcal{S}'|=t$ must hold, and so $\mathcal{S}'$ contains at most one set corresponding to every clause.

Next, we prove that $|\mathcal{S}'|=m$. Let $t$ be the number of tasks of $E_1$ covered by the solution. From previous discussions, $|\mathcal{S}'|=t$. Assume for contradiction that $|\mathcal{S}'|<m$. Then the number of task of $E_2$ covered by $\mathcal{S}'$ is at most $t-1$. Therefore, the total value of all tasks covered by the solution is upper bounded by $X+tZ+(t-1)Y+10n=X+tZ+tY+6m-Y$, while the total cost of all sets in $\mathcal{S}'$ is $(Z+Y+3)t$. Therefore, the total profit is at most $X+6m-Y-3t<0$.
\end{proof}
\section{Proof of Observation \ref{theo:obs3}}

\begin{proof}
Assume for contradiction that for some $z_k\in\mathcal{O}$, both $\tau(z_k,T)$ and $\tau(z_k,F)$ are covered by $\mathcal{S}'$. Recall that there are exactly five clauses containing the variable $z_k$. Both tasks only belong to sets corresponding to these five clauses, and each such set contains exactly one of the two tasks. Since the $Q_j$ value of each such task is $5$, and for each clause $C_k$ exactly one of its corresponding set belongs to $\mathcal{S}'$, it is impossible that both $\tau(z_k,T)$ and $\tau(z_k,F)$ are covered by $\mathcal{S}'$.
\end{proof}
\section{Proof of Observation \ref{theo:obs4}}

\begin{proof}
Assume otherwise. Then $\mathcal{S}'$ covers at most $(1-\frac{\epsilon}{100})n$ tasks of $E_1$, each of which contributes 5 to the solution value, so the tasks of $E_1$ contribute at most $5n(1-\frac{\epsilon}{100})=3m-\frac{\epsilon n}{20}$ value overall. From the above discussion, the tasks of $E_2$ contribute $Ym$, those of $E_3$ contribute $Zm$, and the single task of $E_4$ contributes $X$. The total value of all tasks covered is then at most $X+3m-\frac{\epsilon n}{20}+Zm+Ym$, while the total cost of all sets is $m(Z+Y+3)$. Therefore, the social welfare is $X-\frac{\epsilon n}{20}<0$, a contradiction.
\end{proof}
\end{appendices}

\end{document}